\newcommand{\SA}{\mathit{SA}}
\newcommand{\ISA}{\mathit{ISA}}
\newcommand{\lcp}{\mathit{lcp}}
\newcommand{\LyndonTree}{\mathit{LTree}}
\newcommand{\LCA}{\mathit{lca}}
\newcommand{\RMQ}{\mathit{rmq}}
\newcommand{\ignore}[1]{}
\newcommand{\lexorderl}[1]{\prec_{#1}}
\newcommand{\Lroot}[1]{\textsf{L}-root}
\newcommand{\runs}{\mathit{Runs}}
\newcommand{\begset}{\mathit{Beg}}
\newcommand{\longestLyndonl}[2]{l_{#2}(#1)}
\newcommand{\exrun}{\mathit{exrun}}
\newtheorem{theorem}{Theorem}
\newtheorem{definition}[theorem]{Definition}
\newtheorem{lemma}[theorem]{Lemma}
\title{
  The ``Runs'' Theorem\footnote{A preliminary version of this paper has appeared in~\cite{bannai15:_lyndon}.}
}
\author{
  Hideo~Bannai$^1$\quad
  Tomohiro~I$^2$\quad
  Shunsuke~Inenaga$^1$\quad
  Yuto~Nakashima$^1$\quad\\
  Masayuki~Takeda$^1$\quad
  Kazuya~Tsuruta$^1$\\
  {$^1$ Department of Informatics, Kyushu University}\\
  {\texttt{\{bannai,inenaga,takeda,yuto.nakashima\}@inf.kyushu-u.ac.jp}}\\
  {$^2$ Department of Computer Science, TU Dortmund, Germany}\\
  {\texttt{tomohiro.i@cs.tu-dortmund.de}}
}
\date{}
\begin{document}
\maketitle

\begin{abstract}
  We give a new characterization of maximal repetitions (or runs) in strings based on Lyndon words.
  The characterization leads to a proof of what was known as the ``runs'' conjecture
  (Kolpakov \& Kucherov (FOCS '99)), which states that
  the maximum number of runs $\rho(n)$ in a string of length $n$ is less than $n$.
  The proof is remarkably simple, considering the numerous endeavors to tackle this problem 
  in the last 15 years,
  and significantly improves our understanding of how runs can occur in strings.
  In addition, we obtain an upper bound of $3n$ for the maximum sum of exponents
  $\sigma(n)$ of runs in a string of length $n$,
  improving on the best known bound of $4.1n$ by Crochemore et al. (JDA 2012),
  as well as other improved bounds on related problems.
  The characterization also gives rise to a new, conceptually simple linear-time
  algorithm for computing all the runs in a string.
  A notable characteristic of our algorithm is that, unlike all existing linear-time algorithms,
  it does {\em not} utilize the Lempel-Ziv factorization of the string.
  We also establish a relationship between runs and nodes of the Lyndon tree,
  which gives a simple optimal solution to the
  2-Period Query problem that was recently solved by Kociumaka et al. (SODA 2015).
\end{abstract}

\section{Introduction}
Repetitions in strings are one of the most basic
and well studied characteristics of strings, with various theoretical 
and practical applications (See~\cite{smyth00:_repet,crochemore09:_repet,smyth13:_comput} for surveys).
In this paper, we focus on maximal repetitions, or {\em runs}.
A run is a maximal periodic sub-interval
of a string, that is at least as long as twice its smallest period.
For example, for a string $w[1..11] = \texttt{aababaababb}$,
$[1..2] = \texttt{a}^2$,
$[6..7]=\texttt{a}^2$, and
$[10..11]=\texttt{b}^2$
are runs with period $1$,
$[2..6] = (\texttt{ab})^{5/2}$ and $[7..10]=(\texttt{ab})^2$ are runs with period $2$,
$[4..9] = (\texttt{aba})^2$ is a run with period $3$, and $[1..10] = (\texttt{aabab})^2$
is a run with period $5$.
Runs essentially capture all consecutive repeats of a 
substring in a string.

The most remarkable non-trivial property of runs, 
first proved by Kolpakov and Kucherov~\cite{kolpakov99:_findin_maxim_repet_word_linear_time},
is that the maximum number of runs $\rho(n)$ in a string of length $n$,
is in fact linear in $n$.
Although their proof did not give a specific constant factor, 
it was conjectured that $\rho(n) < n$.
In order to further understand the combinatorial structure of runs in strings,
this ``runs conjecture'' has, since then, become the focus of many investigations.
The first explicit constant was given
by Rytter~\cite{rytter06:_number_runs_strin},
where he showed $\rho(n) < 5n$.
This was subsequently improved to $\rho(n) < 3.48n$
by Puglisi et al.~\cite{puglisi06:_how}
with a more detailed analysis using the same approach.
Crochemore and Ilie~\cite{crochemore08:_maxim} 
further reduced the bound to $\rho(n) < 1.6n$, 
and showed how better bounds could be obtained by computer verification.
Based on this approach, Giraud proved
$\rho(n) < 1.52n$~\cite{giraud08:_not_so_many_runs_strin}
and later $\rho(n) < 1.29n$~\cite{giraud09:_asymp},
but only for binary strings.
The best known upper bound is $\rho(n) < 1.029n$ obtained by
intense computer verification (almost 3 CPU years)~\cite{crochemore11}, 
based on~\cite{crochemore08:_maxim}.
On the other hand, a lower bound of $\rho(n) \geq 0.927n$
was shown by Franek et al.~\cite{franek08}.
Although this bound was first conjectured to be optimal, the bound was later improved by
Matsubara et al.~\cite{matsubara08:_new} to $\rho(n) \geq 0.944565n$.
The best known lower bound is $\rho(n) \geq 0.944575712n$ 
by Simpson~\cite{simpson10:_modif_padov}.
While the conjecture was very close to being proved, 
all of the previous linear upper bound proofs
are based on heavy application of the periodicity lemma by Fine and Wilf~\cite{fine65:_uniquen},
and are known to be very technical, 
which seems to indicate that we still do not yet have a good understanding of how runs can be contained in strings.
For example, the proof for $\rho(n) < 1.6n$
by Crochemore and Ilie~\cite{crochemore08:_maxim} 
required consideration of at least 61 cases (Table 2 of~\cite{crochemore08:_maxim})
in order to bound the number of runs with period at most $9$ by $n$.

In this paper, we give new insights into this difficult problem, significantly 
improving our understanding of the structure of runs in strings.
Our study of runs is based on combinatorics of Lyndon words~\cite{lyndon54:_burnside}.
A Lyndon word is a string that is lexicographically smaller than all of 
its proper suffixes. Despite the simplicity of its definition,
Lyndon words have many deep and interesting combinatorial properties~\cite{Lothaire83}
and have been applied to a wide range of problems~\cite{Lothaire83,reutenauer93:_free_lie_algeb,lalonde95:_stand_lyndon_bases_lie_algeb_envel_algeb,delgrange04:_star,Chemillier04:_periodic_musical_Lyndon,dieudonne07:_circl_lyndon,kufleitner09:_bijective_BWT,BrlekLPR09,hill12:_repres_hecke_lyndon,Mucha13:_lyndon_superstring,dieudonne13:_deter}.
Lyndon words have recently been considered in the context of
runs~\cite{crochemorea12,crochemore14:_extrac}, since any run 
with period $p$ must contain a length-$p$ substring that is a Lyndon word,
called an \Lroot{} of the run.
Concerning the number of cubic runs (runs with exponent at least 3),
Crochemore et al.~\cite{crochemorea12} gave a very simple proof
that it can be no more than $0.5n$. The key observation is that,
for any given lexicographic order,
a cubic run must contain at least two consecutive occurrences of its \Lroot{},
and that the boundary position cannot be shared by consecutive \Lroot{}s of a
different cubic run.
However, this idea does not work for general runs, since,
unlike cubic runs, only one occurrence of an \Lroot{} 
for a given lexicographic order is guaranteed,
and the question of how to effectively apply Lyndon arguments to the analysis 
of the number of general runs has so far not been answered.

The contributions of this paper are summarized below:
\begin{description}
  \item[Proof of \mbox{\boldmath$\rho(n) < n$} and \mbox{\boldmath$\sigma(n) < 3n$}]
    We discover and establish a connection between the \Lroot{}s of
    runs and the longest Lyndon word starting at each position of the string.
    Based on this novel observation, 
    we give an affirmative answer to the runs conjecture.
    The proof is remarkably simple.

    Based on the same observation, we obtain a bound of $3n$ for the maximum sum
    of exponents $\sigma(n)$ of runs in a string of length $n$.
    The best known bound was $4.1n$ by Crochemore et al.~\cite{crochemore12},
    whose arguments were based on the bound of $\rho(n) < 1.029n$.
    We note that plugging-in $\rho(n) < n$ into their proof still only gives a bound of $4n$.

    For higher exponent runs with exponent at least $k\geq 2$, we prove a bound of
    $\rho_k(n) < n/(k-1)$ and $\sigma_k(n) < n(k+1)/(k-1)$, where
    $\rho_k(n)$ is the maximum number of runs with exponent at least $k$ in a string of length $n$,
    and $\sigma_k(n)$ is the maximum sum of exponents of runs with exponent at least $k$ in a string
    of length $n$. For $k = 3$, this yields $\sigma_3(n) < 2n$
    which improves on the bound of $2.5n$ by Crochemore et al.~\cite{crochemore12}.

    We also prove conjectured bounds of $\rho(n,d) \leq n - d$ and
    if $n > 2d$, $\rho(n,d) \leq n - d - 1$ ~\cite{deza14},
    where $\rho(n,d)$ is the
    maximum number of runs in a string of length $n$ that contains exactly
    $d$ distinct symbols\footnote{We note that
      Deza and Franek have independently and simultaneously proved similar bounds~\cite{deza15},
      based on our proof of the runs conjecture in an earlier version of this paper.
    }.
  \item[Linear-time computation of all runs without Lempel-Ziv parsing]
    We give a novel, conceptually simple linear-time algorithm for
    computing all runs contained in a string, based on the proof of $\rho(n) < n$.
    The first linear-time algorithm for computing all runs, proposed by Kolpakov 
    and Kucherov~\cite{kolpakov99:_findin_maxim_repet_word_linear_time}, 
    relies on the computation of the Lempel-Ziv parsing~\cite{LZ77} of the string.
    All other existing linear-time algorithms basically follow their algorithm,
    but focus on more efficient computation of the parsing, which is the bottleneck
    ~\cite{chen07:_fast_pract_algor_comput_all_runs_strin,crochemore08:_comput_longes_previous_factor}.
    Our algorithm is the first linear-time algorithm which does {\em not} rely on the Lempel-Ziv parsing of the string, 
    and thus may help pave the way to more efficient algorithms for computing all runs
    in the string~\cite{smyth14:_large}.
  \item[Runs and Lyndon trees]
    We also establish a relationship between \Lroot{}s of runs in a string
    and nodes of what is called the Lyndon tree of the string~\cite{barcelo90:_free_lie_algeb},
    which is a full binary tree defined by recursive standard factorization.
    We show a simple optimal solution to the 2-Period Query problem
    that was recently solved by Kociumaka et al.~\cite{Kociumaka2015IPM},
    i.e., given any interval $[i..j]$ of a string $w$ of length $n$,
    return the smallest period $p$ of $w[i..j]$ with $p \le (j-i+1)/2$,
    if such exists, in constant time with $O(n)$ preprocessing.
\end{description}

The rest of the paper is organized as follows.
In Section~\ref{sec:preliminaries} we give basic definitions.
In Section~\ref{sec:runs}, we prove that $\rho(n) < n$.
The new linear-time algorithm for computing all runs in a string is
described in~Section~\ref{sec:algorithm}.
Section~\ref{sec:runsandlyndontree} describes the relation between
runs and Lyndon trees, as well as our new solution for the 2-Period Query problem.
Finally, Section~\ref{sec:conclusions} concludes the paper.

\section{Preliminaries}
\label{sec:preliminaries}
Let $\Sigma$ be an ordered finite {\em alphabet}.
An element of $\Sigma^*$ is called a {\em string}.
The length of a string $s$ is denoted by $|s|$. 
The empty string $\varepsilon$ is a string of length 0.
For a string $s = xyz$, $x$, $y$ and $z$ are called
a \emph{prefix}, \emph{substring}, and \emph{suffix} of $s$, respectively.
A prefix (resp. suffix) $x$ of $s$ is called a \emph{proper prefix} (resp. suffix) 
of $s$ if $x \neq s$.
The $i$-th character of a string $s$ is denoted by $s[i]$, where $1 \leq i \leq |s|$.
For a string $s$ and two integers $1 \leq i \leq j \leq |s|$, 
let $s[i..j]$ denote the substring of $s$ that begins at position $i$ and ends at
position $j$. For convenience, let $s[i..j] = \varepsilon$ when $i > j$.
An integer $p \geq 1$ is said to be a \emph{period} of 
a string $s$ if $s[i] = s[i+p]$ for all $1 \leq i \leq |s|-p$.
For any set $I$ of intervals, let $\begset(I)$ denote the set of beginning positions of intervals in $I$.

\begin{definition}[Runs]
  A triple $r = (i,j,p)$ is a run of string $w$,
  if the smallest period $p$ of $w[i..j]$
  satisfies $|w[i..j]| \geq 2p$, and the periodicity cannot be
  extended to the left or right, i.e.,
  $i = 1$ or $w[i-1]\neq w[i+p-1]$,
  and, $j = n$ or $w[j+1]\neq w[j-p+1]$.
  The rational number $\frac{j-i+1}{p}$ is called the exponent of $r$.
\end{definition}

Let $\runs(w)$ denote the set of runs of string $w$.
Denote by $\rho(n)$, the maximum number of runs that are contained in a string of length $n$,
and by $\sigma(n)$, the maximum sum of exponents of runs that are contained in a string of length $n$.

Let $\prec$ denote some total order on $\Sigma$, as well as 
the lexicographic order induced on $\Sigma^*$.

\begin{definition}[Lyndon Word~\cite{lyndon54:_burnside}]
 A non-empty string $w \in \Sigma^+$ is said to be a \emph{Lyndon word}
 with respect to $\prec$, if $w \prec u$ for any non-empty proper suffix $u$ of $w$.
\end{definition}

Note that a Lyndon word $w$ cannot have any period $p < |w|$,
since its existence would imply $w=xyx$ for some non-empty $x,y$, and $x \prec w$.

\begin{lemma}[Lemma 1.6 of~\cite{Duval83:_facrorizing_words_}]\label{lemma:duval}
  Let $w = u^ku^{\prime}a$ be a string for some Lyndon word $u$, a possibly empty proper prefix $u^{\prime}$ of $u$, 
  a positive integer $k$, and $a \in \Sigma$ with $w[|u^{\prime}|+1] \neq a$.
  If $u[|u^{\prime}|+1] \prec a$, $w$ is a Lyndon word.
  If $a \prec u[|u^{\prime}|+1]$, $u$ is the longest prefix Lyndon word of any string having a prefix $u^ku^{\prime}a$.
\end{lemma}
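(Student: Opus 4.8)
The plan is to prove both statements by comparing strings letter by letter; write $m=|u'|$, and note that, since $k\ge 1$, $w[m+1]=u[m+1]$, so the two hypotheses read $u[m+1]\prec a$ and $a\prec u[m+1]$. The only non‑routine ingredient will be one observation on periods of prefixes of a Lyndon word; everything else is case bookkeeping, together with the facts (used freely) that $u[1]$ is minimal among the letters of $u$, that a Lyndon word has no period smaller than its length, and that any proper prefix of a string is lexicographically smaller than the string.

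\emph{First statement} ($u[m+1]\prec a\Rightarrow w=u^ku'a$ is Lyndon). I would verify directly that $w\prec v$ for every non‑empty proper suffix $v$ of $w$, grouping suffixes by where they begin: (i) $v=a$; (ii) $v=ta$ with $t$ a non‑empty proper suffix of $u'$; (iii) $v=u'a$ (only if $m\ge 1$); (iv) $v=u^ju'a$ with $1\le j\le k-1$; (v) $v=su^ju'a$ with $s$ a non‑empty proper suffix of $u$ and $0\le j\le k-1$. Cases (i), (iii), (iv) are immediate: $w$ and $v$ agree along an obvious common prefix (empty for (i), where $w[1]=u[1]\preceq u[m+1]\prec a$; $u'$ for (iii); $u^ju'$ for (iv)) and then $w$ carries $u[m+1]$ whereas $v$ carries $a$, so $w\prec v$. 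Case (v) uses only that $u$ is Lyndon: a non‑empty proper suffix $s$ of $u$ is not a prefix of $u$ (that would make $u[1..|u|-|s|]$ a border and give $u$ a period below $|u|$), so the comparison of $u$ with $s$ first breaks at a position $\le|s|<|u|$ with $u$ carrying the smaller letter, and this mismatch is inherited by the comparison of $w$ with $v$.

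The delicate case is (ii), with $t=u[q+1..m]$ for some $1\le q<m$. If $u[q+1..m]\neq u[1..m-q]$, the first mismatch of $w$ and $v$ already occurs inside the leading copy of $u$ and, exactly as in case (v) (now via the Lyndon comparison of $u$ with its suffix $u[q+1..|u|]$), $w$ carries the smaller letter. If instead $u[q+1..m]=u[1..m-q]$, then $w$ and $v$ agree along $u[1..m-q]$ and the first mismatch pits $u[m-q+1]$ (in $w$) against $a$ (in $v$); here I would invoke the observation that for a Lyndon word $u$ with $1\le q<m\le|u|-1$ and $u[q+1..m]=u[1..m-q]$ one has $u[m-q+1]\preceq u[m+1]$. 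Its proof: $u\prec u[q+1..|u|]$ since $u$ is Lyndon, the two share the prefix $u[1..m-q]$ (the length‑$(m-q)$ prefix of $u[q+1..|u|]$ being $u[q+1..m]=u[1..m-q]$), and cancelling it leaves $u[m-q+1..|u|]\prec u[m+1..|u|]$; comparing first letters forces $u[m-q+1]\preceq u[m+1]$, for otherwise the inequality would be reversed. Combined with $u[m+1]\prec a$ this gives $u[m-q+1]\preceq u[m+1]\prec a$, so $w\prec v$. I expect this period‑cancellation step to be the main obstacle; the rest of the first statement is routine enumeration.

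\emph{Second statement} ($a\prec u[m+1]$). Let $s$ be any string with $u^ku'a$ as a prefix. I would show that $u$ is a Lyndon prefix of $s$ while no longer prefix of $s$ is Lyndon, which pins down $u$ as the longest Lyndon prefix. Since $k\ge 1$, $u$ is a prefix of $u^ku'a$ and hence of $s$, and $u$ is Lyndon by hypothesis. Any prefix of $s$ of length in $(|u|,\,k|u|+m]$ has the form $x=u^tu''$ with $t\ge 1$ and $u''$ a proper (possibly empty) prefix of $u$; it has a non‑empty string $p$ that is simultaneously a proper prefix and a proper suffix of $x$ (take $p=u''$ if $u''\neq\varepsilon$, otherwise $p=u$, which forces $t\ge 2$), and then $p\prec x$ because $p$ is a proper prefix, so $x$ cannot be Lyndon (that would require $x\prec p$ as $p$ is a proper suffix). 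For the prefix $u^ku'a$ itself, and for any longer prefix $u^ku'ay$ with $y\neq\varepsilon$, deleting the leading copy of $u$ produces the proper suffix $u^{k-1}u'a$ (resp.\ $u^{k-1}u'ay$); this suffix and the prefix agree along $u^{k-1}u'$, after which the suffix carries $a$ and the prefix carries $u[m+1]$, so $a\prec u[m+1]$ makes the suffix strictly smaller and the prefix not Lyndon. Hence $u$ is the longest Lyndon prefix of $s$. The only point to watch here is exhaustiveness of this split over prefixes of length above $|u|$, which holds since such a prefix either ends inside $u^ku'$ (first form) or is at least as long as $u^ku'a$ (second form).
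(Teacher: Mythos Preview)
The paper does not supply its own proof of this lemma: it is quoted verbatim as Lemma~1.6 of Duval~\cite{Duval83:_facrorizing_words_} and used as a black box. So there is nothing in the paper to compare your argument against.

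That said, your direct proof is correct and essentially reproduces Duval's original reasoning. A couple of minor remarks. In case~(v) you write that if a proper suffix $s$ of $u$ were also a prefix of $u$ then ``$u[1..|u|-|s|]$'' would be a border; the border is actually $s$ itself (equivalently $u[1..|s|]$), though the conclusion that $u$ acquires period $|u|-|s|<|u|$ is unchanged. In case~(ii), second sub-case, the cancellation step deserves the one extra sentence you implicitly rely on: from $u\prec u[q+1..|u|]$ with both strings of length $>m-q$ and agreeing on the first $m-q$ letters, the tails $u[m{-}q{+}1..|u|]$ and $u[m{+}1..|u|]$ are both non-empty, so the strict inequality passes to the tails, whence $u[m{-}q{+}1]\preceq u[m{+}1]$. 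With these clarifications the argument is complete; the second statement is handled cleanly by your border/suffix-comparison split over prefixes longer than $|u|$.
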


\begin{definition}[\Lroot{}~\cite{crochemore14:_extrac}]
  Let $r = (i,j, p)$ be a run in string $w\in\Sigma^*$.
  An interval $\lambda = [i_\lambda..j_\lambda]$ of length $p$
  is an \Lroot{} of $r$ with respect to $\prec$
  if $i \leq i_\lambda \leq j_\lambda \leq j$ and
  $w[i_\lambda..j_\lambda]$ is a Lyndon word with respect to $\prec$.
\end{definition}
It is easy to see that for any run and lexicographic order $\prec$,
there exists at least one \Lroot{} with respect to $\prec$.

\section{The Runs Theorem}
\label{sec:runs}
Since any string over a unary alphabet can only have at most one run,
we assume a non-unary alphabet $\Sigma$.
Furthermore,
we consider lexicographic orders on strings over $\Sigma$,
induced by an arbitrary pair of total orders $\lexorderl{0}$, $\lexorderl{1}$ on $\Sigma$
such that for any pair of characters $a,b \in \Sigma, a\lexorderl{0}b \Leftrightarrow b\lexorderl{1}a$.
For $\ell\in\{0,1\}$, let $\overline{\ell} = 1 - \ell$.
For any string $w \in \Sigma^*$, let $\hat{w} = w\$$, where $\$\not\in\Sigma$
is a special character that satisfies $\$ \lexorderl{0} a$ (and thus $a \lexorderl{1}\$$)
for any $a\in\Sigma$.

\begin{definition}
  For any string $w$ and position $i~(1 \leq i\leq |w|)$,
  let 
  $\longestLyndonl{i}{\ell} = [i..j]$ where
  $j = \max\{ j^\prime \mid \hat{w}[i..j^\prime] \mbox{ is a Lyndon word with respect to} \lexorderl{\ell}\}$
\end{definition}

\begin{lemma}\label{lemma:onlyoneislongerthanone}
  For any string $w$ of length $n$ and position $i~(1 \leq i \leq |w|)$,
  we have
  for a unique $\ell\in\{0,1\}$ that
  $\longestLyndonl{i}{\ell} = [i..i]$ 
  and $\longestLyndonl{i}{\overline{\ell}} = [i..j]$ where $j > i$.
\end{lemma}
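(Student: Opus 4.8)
The plan is to analyze, for a fixed position $i$, the two characters $\hat{w}[i..i+1]$ — or more precisely the character $\hat{w}[i]$ versus the character $\hat{w}[i+1]$ — and to use the fact that the orders $\lexorderl{0}$ and $\lexorderl{1}$ are mutually reversed on $\Sigma$ (with $\$$ smallest under $\lexorderl{0}$ and largest under $\lexorderl{1}$). If $i = n$, then $\hat{w}[i..n+1] = w[n]\$$, and since $\$ \notin \Sigma$ and $\$$ is the largest character under $\lexorderl{1}$, the string $w[n]\$$ is a Lyndon word with respect to $\lexorderl{1}$ (its only proper suffix is $\$$, which is larger), while it is not a Lyndon word with respect to $\lexorderl{0}$ (since $\$ \lexorderl{0} w[n]$). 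So $\longestLyndonl{n}{0} = [n..n]$ and $\longestLyndonl{n}{1} = [n..n+1]$, giving $\ell = 0$. For $i < n$, consider the characters $a = \hat{w}[i]$ and $b = \hat{w}[i+1]$ in $\Sigma$. If $a = b$, then for both $\ell$ the string $\hat{w}[i..i+1]$ has period $1 < 2$, hence is not Lyndon, so we would need a different argument; I would instead observe that in this case $\hat{w}[i..j']$ being Lyndon forces $j' = i$ for both orders, contradicting the claim — so I must handle this and realize the claim is actually about whether the first position where the Lyndon property can be extended goes one way or the other. Let me reconsider.

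The cleaner approach: a string $s$ of length $\geq 2$ is Lyndon with respect to a given order iff $s$ is strictly smaller than every proper suffix; in particular $s[1] \preceq s[k]$ for all $k$, and for the prefix of length $2$, $s[1..2]$ is Lyndon iff $s[1] \prec s[2]$. Thus $\longestLyndonl{i}{\ell} = [i..i]$ (i.e. the longest Lyndon prefix at $i$ has length $1$) if and only if $\hat{w}[i+1] \lexorderl{\ell} \hat{w}[i]$ — because if $\hat{w}[i] \lexorderl{\ell} \hat{w}[i+1]$ then at least $\hat{w}[i..i+1]$ is Lyndon, so the longest one has length $\geq 2$; and conversely if $\hat{w}[i+1] \lexorderl{\ell} \hat{w}[i]$ then no Lyndon prefix of length $\geq 2$ exists since its first two characters would violate the Lyndon condition; the remaining case $\hat{w}[i] = \hat{w}[i+1]$ cannot occur because $\hat{w}[i] \in \Sigma$ while if $i+1 = n+1$ then $\hat{w}[i+1] = \$ \neq \hat{w}[i]$, and if $i + 1 \leq n$ then $\hat{w}[i] = \hat{w}[i+1]$ would still need handling — so actually equality is possible and I need the Lyndon prefix to have length exactly $1$ precisely when $\hat{w}[i+1] \lexordereql{\ell} \hat{w}[i]$, with the length-$1$ string $\hat{w}[i]$ always trivially Lyndon.

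With this reformulation the result is immediate: by the hypothesis that $a \lexorderl{0} b \Leftrightarrow b \lexorderl{1} a$ for all $a, b \in \Sigma$, and the extension $\$ \lexorderl{0} a$, $a \lexorderl{1} \$$, the two characters $\hat{w}[i]$ and $\hat{w}[i+1]$ are comparable and distinct under at least one order in all cases, and exactly one of the two orders $\lexorderl{\ell}$ satisfies $\hat{w}[i] \lexorderl{\ell} \hat{w}[i+1]$ — namely, if $\hat{w}[i] = \hat{w}[i+1]$ is impossible we are done, and if both are in $\Sigma$ and equal, then neither order has a strict inequality, which would break the lemma; hence I must argue $\hat{w}[i] \neq \hat{w}[i+1]$ is \emph{not} needed and instead the correct statement uses that the longest Lyndon prefix has length $> 1$ for exactly one $\ell$. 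The key structural input making this work even when $w[i] = w[i+1]$ is Lemma~\ref{lemma:duval}: write the maximal run of the character $c = w[i]$ starting at $i$ as $c^m$ (followed by a different character or the end), take $u = c$ (a Lyndon word of length $1$), and apply Lemma~\ref{lemma:duval} with $k = m$; the next character $a$ after $c^m$ (which is $\$$ if we reach the end, or some $w[i+m] \neq c$) satisfies either $c \lexorderl{\ell} a$ — making $\hat{w}[i..i+m]$ Lyndon, so $\longestLyndonl{i}{\ell}$ has length $> 1$ — or $a \lexorderl{\ell} c$ — making $u = c$ the longest Lyndon prefix, so $\longestLyndonl{i}{\ell} = [i..i]$. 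Since $a \neq c$ (it is either a different alphabet character or $\$ \notin \Sigma$), and the two orders reverse each other on $\Sigma \cup \{\$\}$ restricted to $\{c, a\}$, exactly one $\ell$ gives $c \lexorderl{\ell} a$ and the other gives $a \lexorderl{\ell} c$; this is precisely the claimed dichotomy. The main obstacle is getting the case analysis for the first differing character right — I expect to spend most of the effort checking that $a$ is always well-defined and genuinely different from $c$ (handled by the sentinel $\$$), and that Lemma~\ref{lemma:duval} applies with $u' = \varepsilon$ so that $w[|u'|+1] = w[1]$ of the relevant substring equals $u[1] = c$, forcing the comparison to be exactly between $c$ and $a$.
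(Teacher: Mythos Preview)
Your proposal is correct and, after the exploratory detours, converges on exactly the paper's argument: locate the first position $k>i$ with $\hat{w}[k]\neq\hat{w}[i]$ (your ``next character $a$ after the maximal run $c^m$''), observe that the sentinel $\$$ guarantees such a $k$ exists and that $a\neq c$, and then apply Lemma~\ref{lemma:duval} with $u=c$, $u'=\varepsilon$ to conclude that the order $\ell$ with $a\lexorderl{\ell}c$ gives $\longestLyndonl{i}{\ell}=[i..i]$ while the opposite order gives a Lyndon prefix of length at least $k-i+1>1$. The paper's proof is just a terse three-line version of this same reasoning.
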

\begin{proof}
  Let $k = \min\{ k^\prime \mid \hat{w}[k^\prime] \neq \hat{w}[i], k^\prime > i\}$,
  and let $\ell \in \{0,1\}$ be such that $\hat{w}[k] \lexorderl{\ell} \hat{w}[i]$.
  It follows from Lemma~\ref{lemma:duval},
  that $\longestLyndonl{i}{\ell} = [i..i]$ and
  $\longestLyndonl{i}{\overline{\ell}} = [i..j]$ for some $j \geq k > i$.
\end{proof}

\begin{lemma}\label{lemma:lrootislongest}
  Let $r = (i,j,p)$ be an arbitrary run in string $w$ of length $n$.
  Then, for a unique $\ell\in\{0,1\}$ such that $\hat{w}[j+1] \lexorderl{\ell} \hat{w}[j+1-p]$,
  any \Lroot{} $\lambda = [i_\lambda..j_\lambda]$ of $r$ with respect to $\lexorderl{\ell}$ 
  is equal to $\longestLyndonl{i_\lambda}{\ell}$.
\end{lemma}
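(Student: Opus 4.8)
The plan is to fix the unique $\ell \in \{0,1\}$ with $\hat{w}[j+1] \lexorderl{\ell} \hat{w}[j+1-p]$ and show that any \Lroot{} $\lambda = [i_\lambda..j_\lambda]$ of $r$ with respect to $\lexorderl{\ell}$ is the longest Lyndon word starting at $i_\lambda$. Since $w[i_\lambda..j_\lambda]$ is a Lyndon word of length $p$ by definition of an \Lroot{}, we certainly have $j_\lambda \le j'$ where $\longestLyndonl{i_\lambda}{\ell} = [i_\lambda..j']$; so it suffices to prove $j' \le j_\lambda$, i.e.\ that $\hat{w}[i_\lambda..j']$ cannot extend to or past position $j_\lambda+1 = i_\lambda+p$.

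First I would use the fact that $r$ has period $p$ and $\lambda$ lies inside $[i..j]$ with $j_\lambda \le j$, so $i_\lambda + p \le j + 1 \le n+1$; thus the relevant characters of $\hat{w}$ are well defined. The key structural observation is that, because $p$ is a period of $w[i..j]$ and $w[i_\lambda..j_\lambda]$ is a length-$p$ substring, the string $\hat{w}[i_\lambda..\,]$ begins with $(w[i_\lambda..j_\lambda])^t u'$ for the largest $t$ with $i_\lambda + tp - 1 \le j$, where $u'$ is the proper prefix $w[i_\lambda..j]$ shifted appropriately — i.e.\ the Lyndon word $u = w[i_\lambda..j_\lambda]$ repeats as far as the run extends. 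Since $r$ is a run, the periodicity breaks at position $j+1$: either $j+1 > n$ (so the next character is $\$$) or $w[j+1] \ne w[j+1-p]$. I would then apply Lemma~\ref{lemma:duval} with this $u$, the prefix $u^t u'$, and the first mismatching character $a = \hat{w}[j+1]$ at offset $|u'|+1$ within the next copy of $u$. The hypothesis $\hat{w}[j+1] \lexorderl{\ell} \hat{w}[j+1-p]$ is exactly the condition that $a \lexorderl{\ell} u[|u'|+1]$ (note $u[|u'|+1] = w[j+1-p]$ by the periodicity), so the second clause of Lemma~\ref{lemma:duval} applies and tells us that $u$ is the \emph{longest} prefix Lyndon word (with respect to $\lexorderl{\ell}$) of any string having prefix $u^t u' a = \hat{w}[i_\lambda..j+1]$. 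In particular $\hat{w}[i_\lambda..\,]$ has no prefix Lyndon word longer than $u$, which gives $j' = i_\lambda + p - 1 = j_\lambda$, as required. I should also note uniqueness of $\ell$: the two characters $\hat{w}[j+1]$ and $\hat{w}[j+1-p]$ differ (by the run's maximality, or because one is $\$$ and the other is in $\Sigma$), so exactly one of $\lexorderl{0}, \lexorderl{1}$ orders them the stated way.

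The main obstacle I anticipate is bookkeeping the indices so that Lemma~\ref{lemma:duval} applies cleanly: I must verify that $u^t u'$ is genuinely a prefix of $\hat{w}[i_\lambda..\,]$, that $u'$ (which may be empty) is a proper prefix of $u$, that $w[|u'|+1\text{-th position of the tail}] \ne a$, and — most importantly — that $u[|u'|+1]$ really equals $w[j+1-p]$, which requires carefully tracking how the offset $j+1-i_\lambda$ reduces modulo $p$. A minor subtlety is the boundary case $j+1 = n+1$, where $a = \$$ and $\$ \lexorderl{0} c$ for every $c \in \Sigma$; one checks this is consistent with the stated choice of $\ell$. Once these alignments are pinned down, the conclusion is immediate from the cited lemma.
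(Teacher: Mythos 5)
Your proposal is correct and follows essentially the same route as the paper: the paper's proof also derives the uniqueness of $\ell$ from $\hat{w}[j+1]\neq\hat{w}[j+1-p]$ and then invokes the second clause of Lemma~\ref{lemma:duval} applied to the decomposition $\hat{w}[i_\lambda..j+1]=u^t u' a$ with $u=w[i_\lambda..j_\lambda]$, exactly as you do. The only difference is that the paper leaves the index alignment ($u[|u'|+1]=w[j+1-p]$, $u'$ a proper prefix, the $\$$ boundary case) implicit, whereas you spell it out.
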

\begin{proof}
  By the definition of $r$, $\hat{w}[j+1] \neq \hat{w}[j+1-p]$.
  Therefore, there exists a unique $\ell\in\{0,1\}$ 
  such that $\hat{w}[j+1] \lexorderl{\ell} \hat{w}[j+1-p]$.
  Let $[i_\lambda..j_\lambda]$ be an \Lroot{} of $r$ with respect to $\lexorderl{\ell}$.
  It follows from Lemma~\ref{lemma:duval} that $[i_\lambda..j_\lambda] = \longestLyndonl{i_\lambda}{\ell}$.
\end{proof}

For any run $r = (i,j,p)$ of $w$,
let
$B_r = \{ \lambda = [i_\lambda..j_\lambda] \mid \lambda
        \mbox{ is an \Lroot{} of $r$ with respect to $\lexorderl{\ell}$},$ $i_\lambda \neq i \}$,
where $\ell\in\{0,1\}$ is such that
$\hat{w}[j+1] \lexorderl{\ell} \hat{w}[j+1-p]$,
i.e., $B_r$ is the set of all \Lroot{}s $[i_\lambda..j_\lambda]$ of $r$ 
with respect to $\lexorderl{\ell}$ such that $[i_\lambda..j_\lambda] = \longestLyndonl{i_\lambda}{\ell}$,
except for the one that starts from $i$ if it exists.
Note that $|\begset(B_r)| = |B_r| \geq \lfloor e_r - 1\rfloor \geq 1$, where $e_r$ is the exponent of $r$.

\begin{lemma}\label{lemma:exclusive}
  For any two distinct runs $r$ and $r^{\prime}$ of string $w$, 
  $\begset(B_r)\cap\begset(B_{r^{\prime}})$ is empty.
\end{lemma}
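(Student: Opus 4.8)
The plan is to prove the contrapositive-style statement directly: assume some position $k$ belongs to both $\begset(B_r)$ and $\begset(B_{r'})$ for distinct runs $r = (i,j,p)$ and $r' = (i',j',p')$, and derive a contradiction. By Lemma~\ref{lemma:lrootislongest}, if $k \in \begset(B_r)$ then there is an \Lroot{} $\lambda = [k..k+p-1]$ of $r$ with respect to $\lexorderl{\ell}$, where $\ell$ is determined by $r$ via $\hat{w}[j+1] \lexorderl{\ell} \hat{w}[j+1-p]$, and moreover $\lambda = \longestLyndonl{k}{\ell}$. Similarly $\lambda' = [k..k+p'-1] = \longestLyndonl{k}{\ell'}$ is an \Lroot{} of $r'$ with the analogous $\ell'$. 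Now invoke Lemma~\ref{lemma:onlyoneislongerthanone}: for exactly one of the two orders is $\longestLyndonl{k}{\cdot}$ nontrivial (longer than a single character). Since both $\lambda$ and $\lambda'$ have length at least $2$ (an \Lroot{} of a run has length equal to the run's period $p \ge 1$, but a run with period $1$ over a non-unary alphabet still has an \Lroot{} of length $1$ — so I need to be slightly careful here; in the period-$1$ case $p=1$ and the single character must be the Lyndon word, which is fine, and $\longestLyndonl{k}{\ell}=[k..k]$ forces $\ell$ to be the unique order of Lemma~\ref{lemma:onlyoneislongerthanone}). The upshot is that $\ell = \ell'$: the order selected by $r$ and the order selected by $r'$ through their \Lroot{}s at the shared position $k$ must coincide, because $\longestLyndonl{k}{\cdot}$ pins down a single admissible order once we know $[k..k]$ is or is not the answer.

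With $\ell = \ell'$ in hand, we have $\lambda = \longestLyndonl{k}{\ell} = \lambda'$, hence $p = p'$: the two runs share a common \Lroot{} interval $\lambda$ at the same position with the same period. The next step is to show that a run is uniquely determined by any one of its \Lroot{}s together with the period $p$ — i.e., that $r$ is exactly the maximal extension of the periodicity of $w[k..k+p-1]$ to the left and right. This is essentially the definition of a run: the run $(i,j,p)$ is the unique maximal interval with smallest period $p$ containing $[k..k+p-1]$, provided $[k..k+p-1]$ lies inside a run of period $p$. So $r$ and $r'$ would be the same triple, contradicting distinctness.

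First I would state and use the uniqueness-of-run-from-a-periodic-window fact explicitly (it may need a one-line justification: if $w[k..k+p-1]$ has smallest period $p$ and occurs inside both $r$ and $r'$ with $r,r'$ having period $p$, then both $r$ and $r'$ equal the maximal run obtained by extending left and right from $[k..k+p-1]$, since the extension process is deterministic and the smallest period of the extended interval is still $p$). Then the argument is: shared beginning position $\Rightarrow$ (via Lemmas~\ref{lemma:lrootislongest} and~\ref{lemma:onlyoneislongerthanone}) same $\ell$ $\Rightarrow$ same \Lroot{} interval $\Rightarrow$ same period $p$ $\Rightarrow$ same run.

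The main obstacle is the bookkeeping around which total order $\ell$ each run "chooses," and in particular making airtight the claim that the order associated to $r$ in the definition of $B_r$ is forced to agree, at the shared position $k$, with the order associated to $r'$. The clean way to handle this is the one sketched above: both $B_r$ and $B_{r'}$ record \Lroot{}s that coincide with the \emph{longest} Lyndon word starting at their position, and Lemma~\ref{lemma:onlyoneislongerthanone} says there is only one order for which that longest Lyndon word at position $k$ is nontrivial — so as soon as $\lambda$ and $\lambda'$ are genuinely nontrivial Lyndon words starting at $k$ (which they are, because $k \neq i$ and $k \neq i'$ guarantee room to the left inside the run, and the run structure gives a Lyndon factor of length $p \geq 1$; the degenerate $p=1$ case is handled by noting $[k..k]$ being the longest still fixes $\ell$ uniquely by the same lemma), they must arise from the same $\ell$. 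I would make sure to address the $p=1$ edge case in one sentence rather than leaving it implicit.
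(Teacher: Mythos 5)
The central claim in your argument --- that the two runs must select the same order, $\ell = \ell'$ --- is exactly the content of the lemma, and your justification for it does not hold. Lemma~\ref{lemma:onlyoneislongerthanone} says that at the shared position $k$ the longest Lyndon word is trivial (equal to $[k..k]$) for exactly one order and nontrivial for the other; this is perfectly compatible with $\ell \neq \ell'$. Concretely, if $r$ has period $1$, its \Lroot{} at $k$ is $[k..k] = \longestLyndonl{k}{\ell}$, while $r'$ with period $p' > 1$ may have its \Lroot{} $[k..k+p'-1] = \longestLyndonl{k}{\overline{\ell}}$ with respect to the \emph{other} order; both memberships $[k..k] \in B_r$ and $[k..k+p'-1] \in B_{r'}$ are consistent with Lemmas~\ref{lemma:onlyoneislongerthanone} and~\ref{lemma:lrootislongest}. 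Your parenthetical treatment of the $p=1$ case only identifies which order $r$ uses; it does not prevent $r'$ from using the complementary order, so this mixed case is never ruled out, and it is precisely the case where the lemma has its substance. (When both \Lroot{}s are nontrivial your deduction $\ell=\ell'$, hence $\lambda=\lambda'$, hence $r=r'$ via the uniqueness of the maximal periodic extension, is fine --- but that is the easy case.)

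What is needed, and what the paper's proof does, is to exploit the clause $i_\lambda \neq i$ in the definition of $B_r$: since neither \Lroot{} starts at its run's beginning, both runs begin strictly before $k$, so position $k-1$ lies inside both runs. The period-$1$ run then forces $w[k-1] = w[k]$, while the other run, of period $p' = j_{\lambda'}-k+1$, forces $w[k-1] = w[k-1+p'] = w[j_{\lambda'}]$; hence $w[k] = w[j_{\lambda'}]$, contradicting the fact that a Lyndon word of length at least $2$ cannot begin and end with the same character. Your proposal uses the ``room to the left'' given by $i_\lambda \neq i$ only to argue (incorrectly) that both \Lroot{}s are nontrivial --- a period-$1$ run has a length-$1$ \Lroot{} no matter how much room there is --- and it never derives a contradiction in the mixed case, so the proof as written has a genuine gap.
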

\begin{proof}
  Suppose that there exist $i\in\begset(B_r)\cap\begset(B_{r^{\prime}})$, and
  $\lambda = [i..j_{\lambda}] \in B_{r}$ and $\lambda^{\prime} = [i..j_{\lambda^{\prime}}]\in B_{r^{\prime}}$.
  Let $\ell \in \{ 0, 1\}$ be such that $\lambda = \longestLyndonl{i}{\ell}$.
  Since $\lambda \neq \lambda^{\prime}$, $\lambda^{\prime} = \longestLyndonl{i}{\overline{\ell}}$.
  By Lemma~\ref{lemma:onlyoneislongerthanone}, either $\lambda$ or $\lambda^{\prime}$ is $[i..i]$.
  Assume w.l.o.g. that $\lambda = [i..i]$
  and $j_{\lambda^{\prime}} > i$.
  Since $w[i..j_{\lambda^{\prime}}]$ is a Lyndon word, $w[i] \neq w[j_{\lambda^{\prime}}]$.
  By the definition of $B_r$ and $B_{r^{\prime}}$, the beginning positions of runs $r$ and $r^\prime$ are both less than $i$,
  which implies $w[i-1] = w[i]$ (due to $r$) and $w[i-1] = w[j_{\lambda^{\prime}}]$ (due to $r'$).
  Hence we get $w[i] = w[i-1] = w[j_{\lambda^{\prime}}]$, a contradiction.
\end{proof}

Lemma~\ref{lemma:exclusive} shows that each run $r$ can be associated with 
a disjoint set of positions $\begset(B_r)$.
Also, since $1\not\in\begset(B_r)$ for any run $r$,
$\sum_{r \in \runs(w)} |B_r| = \sum_{r \in \runs(w)} |\begset(B_r)| \leq |w|-1$ holds.
Therefore, we obtain the following results.

\begin{theorem}\label{theorem:rho}
  $\rho(n) < n$.
\end{theorem}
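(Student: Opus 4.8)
The plan is to harvest the injective assignment $r \mapsto \begset(B_r)$ that has just been set up and push it through to the counting bound. The key fact already established is that for each run $r$ we have $|\begset(B_r)| = |B_r| \geq \lfloor e_r - 1 \rfloor \geq 1$, so in particular $\begset(B_r)$ is non-empty, and by Lemma~\ref{lemma:exclusive} the sets $\begset(B_r)$ are pairwise disjoint across distinct runs. Moreover every $\begset(B_r)$ lives inside $\{1,\dots,n\}$, and since an \Lroot{} counted in $B_r$ must begin strictly to the right of the starting position of $r$, and $r$ begins at position $\geq 1$, we have $1 \notin \begset(B_r)$ for every run $r$; hence all the $\begset(B_r)$ in fact lie inside $\{2,\dots,n\}$, a set of size $n-1$.

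First I would form the disjoint union $\bigsqcup_{r \in \runs(w)} \begset(B_r) \subseteq \{2,\dots,n\}$. Because the union is disjoint, its cardinality is $\sum_{r \in \runs(w)} |\begset(B_r)|$, and this is bounded above by $n-1$. Then, using $|\begset(B_r)| \geq 1$ for every run, I would conclude
\[
|\runs(w)| \;\leq\; \sum_{r \in \runs(w)} |\begset(B_r)| \;\leq\; n-1 \;<\; n.
\]
Taking the maximum over all strings $w$ of length $n$ gives $\rho(n) \leq n-1 < n$, which is the theorem. (For $n = 0$ or $n=1$ the claim is trivial since there are no runs, so one may harmlessly assume $n \geq 2$.)

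The only genuine subtlety — and it is a subtlety in bookkeeping rather than in mathematics — is making sure the ``$1 \notin \begset(B_r)$'' observation is airtight, since that is what buys the strict inequality rather than merely $\rho(n) \leq n$. This follows because any $\lambda = [i_\lambda .. j_\lambda] \in B_r$ satisfies $i_\lambda \neq i$ by definition of $B_r$, together with $i_\lambda \geq i \geq 1$ from the \Lroot{} definition, forcing $i_\lambda \geq 2$. Everything else is a one-line consequence of disjointness (Lemma~\ref{lemma:exclusive}) and non-emptiness (the $\lfloor e_r-1\rfloor \geq 1$ remark, which in turn uses $e_r \geq 2$ from the definition of a run). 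So I would keep the proof to essentially three sentences: disjointness plus containment in $\{2,\dots,n\}$ gives $\sum_r |\begset(B_r)| \leq n-1$; non-emptiness gives $|\runs(w)| \leq \sum_r |\begset(B_r)|$; chain the two.
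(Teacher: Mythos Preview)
Your proof is correct and is essentially identical to the paper's own argument: both combine the disjointness of the $\begset(B_r)$ from Lemma~\ref{lemma:exclusive}, the non-emptiness $|B_r|\geq 1$, and the observation $1\notin\begset(B_r)$ to obtain $|\runs(w)|\leq\sum_r|B_r|\leq n-1$. Your write-up is slightly more explicit about why $1\notin\begset(B_r)$, but the content is the same.
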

\begin{proof}
  Consider string $w$ of length $n$.
  Since $|B_r| \geq 1$ for any $r \in \runs(w)$,
  it follows from Lemma~\ref{lemma:exclusive} that $|\runs(w)| \leq \sum_{r \in \runs(w)} |B_r| \leq n-1$.
\end{proof}

\begin{theorem}
  $\sigma(n) \leq 3n-3$.
\end{theorem}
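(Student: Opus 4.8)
The plan is to refine the counting argument behind Theorem~\ref{theorem:rho} by weighting each run by its exponent rather than by $1$. Recall that for a run $r$ with exponent $e_r$, we have $|B_r| \geq \lfloor e_r - 1 \rfloor$, and by Lemma~\ref{lemma:exclusive} the sets $\begset(B_r)$ over all runs $r \in \runs(w)$ are pairwise disjoint subsets of $\{2, \dots, n\}$. So $\sum_r \lfloor e_r - 1 \rfloor \leq n - 1$. The difficulty is that $e_r$ need not be an integer, so $\lfloor e_r - 1 \rfloor$ can be much smaller than $e_r - 1$ (e.g.\ a run of exponent slightly below $2$ contributes only $1$ to the left side but nearly $2$ to $\sum_r (e_r-1)$); hence this inequality alone gives only something like $\sigma(n) < 2n + (\text{number of runs}) < 3n$ after invoking $\rho(n) < n$, but with the wrong additive constant. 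I want the sharper bound $\sigma(n) \le 3n-3$, so the fractional parts must be controlled more carefully.

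First I would write $e_r = \lfloor e_r \rfloor + f_r$ with $0 \le f_r < 1$, and split the exponent as $e_r = (e_r - 1) + 1 \le \lfloor e_r - 1\rfloor + 1 + f_r < \lfloor e_r - 1\rfloor + 2$. Summing, $\sigma(w) = \sum_r e_r < \sum_r \lfloor e_r - 1\rfloor + 2\rho(n) \le (n-1) + 2(n-1) = 3n - 3$ using both $\sum_r \lfloor e_r-1\rfloor \le n-1$ (Lemma~\ref{lemma:exclusive}) and $\rho(n) \le n-1$ (Theorem~\ref{theorem:rho}). Since $\sigma(w)$ can be irrational while $3n-3$ is an integer, the strict inequality $\sigma(w) < 3n-3$ upgrades to $\sigma(n) \le 3n-3$ only if we are careful; in fact the chain above already yields $\sigma(w) < 3n-3$, which gives the stated $\sigma(n) \le 3n - 3$ (and indeed the strict version $\sigma(n) < 3n-3$ when $n \ge 2$). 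So the real content is just: bound the integer part of $e_r - 1$ by the disjoint Lyndon-root positions, bound the count of runs by $n-1$, and observe $e_r < \lfloor e_r - 1\rfloor + 2$ term-by-term.

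The main obstacle — and the only place one must be slightly attentive — is the term-by-term estimate $e_r < \lfloor e_r - 1 \rfloor + 2$. This holds because $e_r - 1 < \lfloor e_r - 1\rfloor + 1$, so $e_r < \lfloor e_r - 1\rfloor + 2$; it is never an equality, which is what lets every run (even those of small exponent close to $2$) be charged at most $2$ "extra" beyond its contribution to the disjoint-position count, with strict slack. Combining the two linear bounds $\sum_r \lfloor e_r-1\rfloor \le n-1$ and $\rho(w) \le n-1$ then closes the argument. I would also remark that the same bookkeeping, restricted to runs with exponent at least $k$, gives $\sigma_k(n) \le \frac{(k+1)(n-1)}{k-1}$ via $\rho_k(n) \le \frac{n-1}{k-1}$ and $e_r \le \lfloor e_r - 1\rfloor + \frac{k+1}{k-1}\cdot\frac{1}{1}$-type estimates, but that is a separate statement; for the present theorem the two-line combination above suffices.
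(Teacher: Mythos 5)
Your proposal is correct and is essentially the paper's own argument: the term-by-term bound $e_r < \lfloor e_r-1\rfloor + 2$ is the same as the paper's $|B_r| \ge \lfloor e_r - 1\rfloor > e_r - 2$, combined with $\sum_r |B_r| \le n-1$ from Lemma~\ref{lemma:exclusive} and $|\runs(w)| \le n-1$ from Theorem~\ref{theorem:rho} to get $\sum_r e_r < (n-1) + 2(n-1) = 3n-3$. No substantive difference from the paper's proof.
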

\begin{proof}
  Consider string $w$ of length $n$.
  Let $e_r$ denote the exponent of run $r$.
  Since $|B_r| \geq \lfloor e_r - 1 \rfloor > e_r - 2$ for any $r \in \runs(w)$, 
  it follows from Lemma~\ref{lemma:exclusive} that 
  $\sum_{r \in \runs(w)} (e_r - 2) < \sum_{r \in \runs(w)} \lfloor e_r - 1 \rfloor \leq \sum_{r \in \runs(w)} |B_r| \leq n-1$.
  Using $|\runs(w)| \leq n-1$ from Theorem~\ref{theorem:rho}, we get $\sum_{r \in \runs(w)} e_r < n + 2|\runs(w)| - 1 \leq 3n-3$.
\end{proof}

\subsection{Higher Exponent Runs}
Let $\runs_k(w)$ denote the set of runs of string $w$ with exponent at least $k$, 
$\rho_k(n)$ the maximum number of runs with exponent at least $k$ in a string of length $n$,
and $\sigma_k(n)$ the maximum sum of exponents of runs with exponent at least $k$ in a string of length $n$.
Crochemore et al.~\cite{crochemore12} have shown a bound of
$2.5n$ for $\sigma_3(n)$. 
Below, we prove a tighter bound, and show bounds for general integer $k$ as well.

\begin{theorem}
  $\rho_k(n) < n/(k-1)$, $\sigma_k(n) < n(k+1)/(k-1)$.
\end{theorem}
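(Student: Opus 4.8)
The plan is to reuse the disjointness machinery of Lemma~\ref{lemma:exclusive} verbatim, but to observe that for runs with large exponent the associated set $B_r$ is guaranteed to be large, not merely nonempty. Concretely, for any run $r$ with exponent $e_r \geq k$ we have already noted $|\begset(B_r)| = |B_r| \geq \lfloor e_r - 1\rfloor \geq k - 1$, since a run of period $p$ and length $\geq (k-1)p$ after deleting one occurrence of the L-root still contains at least $\lfloor e_r\rfloor - 1 \geq k-1$ full occurrences of the chosen L-root, each of which (by Lemma~\ref{lemma:lrootislongest}) begins at a distinct position in $\begset(B_r)$, and none of these begin at position~$1$. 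First I would restate this bound cleanly, then invoke Lemma~\ref{lemma:exclusive} to conclude that the sets $\begset(B_r)$ over $r \in \runs_k(w)$ are pairwise disjoint subsets of $\{2,\dots,n\}$, hence $\sum_{r \in \runs_k(w)} |B_r| \leq n - 1 < n$.

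For the first inequality, combine $|B_r| \geq k-1$ with disjointness:
\[
(k-1)\,|\runs_k(w)| \;\leq\; \sum_{r \in \runs_k(w)} |B_r| \;\leq\; n-1 \;<\; n,
\]
so $\rho_k(n) < n/(k-1)$. For the second inequality, I would bound the sum of exponents by splitting $e_r = (e_r - 2) + 2$, exactly as in the proof of the $\sigma(n)$ bound. Using $|B_r| \geq \lfloor e_r - 1\rfloor > e_r - 2$ together with disjointness gives $\sum_{r \in \runs_k(w)} (e_r - 2) < n - 1 < n$; adding $2|\runs_k(w)| \leq 2n/(k-1)$ yields
\[
\sigma_k(n) \;=\; \sum_{r \in \runs_k(w)} e_r \;<\; n + \frac{2n}{k-1} \;=\; \frac{n(k+1)}{k-1}.
\]

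There is essentially no hard part: the combinatorial content is entirely contained in Lemmas~\ref{lemma:onlyoneislongerthanone}--\ref{lemma:exclusive}, which are stated for all runs and apply unchanged when we restrict attention to the subset $\runs_k(w)$. The only point that requires a moment's care is the inequality $|B_r| \geq k-1$ for $e_r \geq k$ — one must check that removing the single L-root starting at position $i$ (the run's left endpoint), if present, still leaves at least $k-1$ occurrences; this follows because a run with $\lfloor e_r\rfloor \geq k$ contains at least $\lfloor e_r \rfloor$ occurrences of its L-root and at least $k$ of them start strictly inside $[i..j]$ when $k \geq 2$, while $\lfloor e_r - 1\rfloor \geq k-1$ handles the general rational case. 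I would fold this remark into a single sentence referencing the note already made just before Lemma~\ref{lemma:exclusive}, and present the two displays above as the complete proof.
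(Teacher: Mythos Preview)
Your proposal is correct and follows essentially the same approach as the paper: both arguments invoke the already-established inequality $|B_r| \geq \lfloor e_r - 1\rfloor \geq k-1$ for runs with $e_r \geq k$, combine it with the disjointness of the sets $\begset(B_r)$ from Lemma~\ref{lemma:exclusive} to bound $\rho_k(n)$, and then split $e_r = (e_r-2)+2$ to bound $\sigma_k(n)$. Your write-up is somewhat more verbose than the paper's three-line version, and the side discussion about counting \Lroot{} occurrences ``strictly inside $[i..j]$'' is unnecessary since the bound $|B_r| \geq \lfloor e_r-1\rfloor$ is already stated just before Lemma~\ref{lemma:exclusive}, but the mathematical content is identical.
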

\begin{proof}
  Notice that for any run $r$ with exponent at least $k$,
  $|B_r| \geq \lfloor e_r - 1 \rfloor \geq k-1$.
  Therefore, 
  $|\runs_k(w)| \leq \sum_{r\in\runs_k(w)}|B_r|/(k-1) \leq n/(k-1)$.
  Also,
  $\sum_{r\in\runs_k(w)} e_r 
  = \sum_{r\in\runs_k(w)}(e_r-2) + 2|\runs_k(w)|
  \leq \sum_{r\in\runs_k(w)} |B_r| + 2n/(k-1)
  < n + 2n/(k-1) = n(k+1)/(k-1)$.
\end{proof}

\subsection{Runs with $d$ distinct symbols}
Let $\rho(n,d)$ denote the maximum number of runs in a string of length $n$ 
that contains exactly $d$ distinct symbols.
We prove the following bounds conjectured in~\cite{deza14}.
\begin{theorem}
  $\rho(n,d) \leq n - d$. 
  Furthermore, if $n > 2d$, then $\rho(n,d) \leq n - d - 1$.
\end{theorem}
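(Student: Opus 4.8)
The plan is to refine the count behind Theorem~\ref{theorem:rho} by classifying runs according to the first letter of their \Lroot{}s. Fix a string $w$ of length $n$ with exactly $d$ distinct letters, and for $a\in\Sigma$ write $n_a$ for the number of occurrences of $a$ in $w$, so that $\sum_a n_a=n$ and exactly $d$ of the $n_a$ are positive. For a run $r=(i,j,p)$ let $\ell_r$ be the unique index with $\hat w[j+1]\lexorderl{\ell_r}\hat w[j+1-p]$, the order used in the definition of $B_r$. Since $p$ is the smallest period of $w[i..j]$, the root $w[i..i+p-1]$ is primitive, so it has a unique Lyndon conjugate with respect to $\lexorderl{\ell_r}$; moreover every \Lroot{} of $r$ with respect to $\lexorderl{\ell_r}$ is a length-$p$ factor inside $[i..j]$, hence a cyclic rotation of $w[i..i+p-1]$, and being a Lyndon word it must equal that conjugate. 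Consequently all \Lroot{}s counted by $B_r$ begin with the same letter, call it $a_r$, and every position in $\begset(B_r)$ carries the letter $a_r$.

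The core of the argument is the claim that for every letter $a$,
\[
  |\{r\in\runs(w):a_r=a\}|\le n_a-1 .
\]
Granting this and summing over the $d$ letters that occur gives
$\rho(n,d)\le|\runs(w)|=\sum_a|\{r:a_r=a\}|\le\sum_a(n_a-1)=n-d$,
which is the first inequality. To prove the claim, fix $a$ and set $R_a=\{r:a_r=a\}$. By Lemma~\ref{lemma:exclusive} the sets $\begset(B_r)$ with $r\in R_a$ are pairwise disjoint, and by the previous paragraph each is a subset of the $n_a$ positions carrying $a$; hence $\sum_{r\in R_a}|B_r|\le n_a$. If some $r\in R_a$ has $|B_r|\ge2$, then, using $|B_r|\ge1$ for every run, $|R_a|=\sum_{r\in R_a}1\le\sum_{r\in R_a}|B_r|-1\le n_a-1$ and we are done. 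Otherwise $|B_r|=1$ for every $r\in R_a$, so $|R_a|=\sum_{r\in R_a}|B_r|=\bigl|\bigcup_{r\in R_a}\begset(B_r)\bigr|$, and it suffices to exhibit one occurrence of $a$ that lies in no $\begset(B_r)$. Here I would take the run $r_0\in R_a$ with smallest left endpoint $i_{r_0}$ and protect a suitable occurrence of $a$ attached to it: when the root of $r_0$ is itself the Lyndon conjugate (in particular whenever $r_0$ has period $1$) this is the position $i_{r_0}$ itself, and in general it is an occurrence of $a$ inside $r_0$ that is not an \Lroot{}-start. The point is that if such a position belonged to $\begset(B_r)$ for some run $r$, then $a_r=a$ and (by the maximality of runs, together with the fact that every run in $R_a$ has $|B_r|=1$ in this case) $r$ would start strictly to the left of $r_0$, contradicting the choice of $r_0$. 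Making this last step precise — choosing exactly which occurrence of $a$ to protect and checking it escapes every $\begset(B_r)$ — is the part I expect to be the most delicate, since in the tight case the protected occurrence need not be a first or last occurrence of $a$ in $w$.

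For the refinement, assume $n>2d$ and suppose for contradiction that $\rho(n,d)=n-d$, so the inequality of the claim is tight for every occurring letter. I would then analyse this extremal situation: tightness for a letter $a$ forces the sets $\begset(B_r)$ with $a_r=a$ to cover all, or all but one, of the occurrences of $a$, forces all but at most one of these $B_r$ to be singletons, and hence forces all runs to have exponent below $3$; more importantly it rigidly constrains how runs with a common \Lroot{}-letter can be arranged inside $w$. The plan is to show that this rigidity is incompatible with $n>2d$ — intuitively, tightness everywhere pins the positions of the \Lroot{}s so tightly that $w$ can carry at most two occurrences of most of its letters — thereby forcing $\rho(n,d)\le n-d-1$. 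I expect this global rigidity argument to be considerably harder than the $\rho(n,d)\le n-d$ half, since it has to rule out configurations such as strings whose runs all share a single \Lroot{}-letter while every other letter occurs exactly twice.
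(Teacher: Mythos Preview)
Your proposal has genuine gaps in both halves, and it diverges from the paper's route in a way that makes those gaps hard to close.

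For $\rho(n,d)\le n-d$: the per-letter claim $|R_a|\le n_a-1$ may be true, but you have not proved it. The case you yourself flag as delicate --- every $r\in R_a$ has $|B_r|=1$ and the leftmost run $r_0$ has its Lyndon offset strictly positive --- is where the argument breaks. You propose to protect ``an occurrence of $a$ inside $r_0$ that is not an \Lroot{}-start'' and assert that if this position lay in some $\begset(B_{r'})$ then $r'$ would start strictly left of $r_0$. That assertion is unsupported: nothing prevents a run $r'\in R_a$ with $i_{r_0}\le i_{r'}$ from having its \Lroot{} begin at your protected position, and neither ``maximality of runs'' nor the hypothesis $|B_{r'}|=1$ forces $i_{r'}<i_{r_0}$. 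The paper avoids this difficulty altogether by not working with an arbitrary pair of orders: it \emph{chooses} $\lexorderl{0},\lexorderl{1}$ according to the positions of the last occurrences of the letters. With that choice, for each letter $c_k$ the start $i'_k$ of the final maximal $c_k$-block has $\longestLyndonl{i'_k}{1}=[i'_k..n{+}1]$ and $\longestLyndonl{i'_k}{0}=[i'_k..i'_k]$, and one checks directly that neither interval can lie in any $B_r$. This produces $d$ excluded positions with no case analysis and no per-letter counting.

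For $\rho(n,d)\le n-d-1$ when $n>2d$: here the proposal is essentially a hope. You sketch a ``global rigidity'' analysis of simultaneous tightness $|R_a|=n_a-1$, but carry none of it out and explicitly list configurations you cannot yet exclude. The paper's proof is short precisely because of the tailored order: either all $d$ excluded positions $i'_k$ exceed $1$, in which case position $1$ is an additional excluded position and we are done; or some $i'_k=1$, which forces $w$ to begin with a maximal block $c_k^{i_k}$ followed by a string over $d-1$ letters, and a brief induction on $d$ (splitting on $i_k=1$, $i_k=2$, $i_k\ge 3$) finishes. That inductive decomposition is exactly what the specific order makes available, and there is no visible analogue in your per-letter framework.
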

\begin{proof}
  Let $\Sigma = \{ c_1, \ldots, c_d\}$.
  First, we show $\rho(n,d) \leq n - d$. 
  For any character $c_k \in \Sigma$, let $i_k$ denote its last occurrence,
  i.e. $i_k = \max\{ i \mid w[i] = c_k, 1 \leq i \leq n \}$.
  Choose the pair of total orders $\lexorderl{0},\lexorderl{1}$ on $\Sigma$, so that
  for any $1 \leq k,k^\prime \leq d$,
  $c_{k^\prime} \lexorderl{0} c_{k}\Leftrightarrow c_k \lexorderl{1} c_{k^\prime} \Leftrightarrow i_k < i_{k^\prime}$.
  Also, let $i^\prime_k = \min \{ i \leq i_k \mid w[i..i_k] = c_k^{i_k-i+1} \}$.
  Then, for any $1 \leq k\leq d$,
  since $c_k = w[i^\prime_k] = \cdots = w[i_k]$ is smaller than any
  character in $\hat{w}[i_k+1..n+1]$ with respect to
  $\lexorderl{1}$, we have that
  $\longestLyndonl{i^\prime_k}{1} = [i^\prime_k..n+1]$, and from Lemma~\ref{lemma:onlyoneislongerthanone},
  $\longestLyndonl{i^\prime_k}{0} = [i^\prime_k..i^\prime_k]$.
  Since $\hat{w}[i^\prime_k..n+1]$ includes the symbol $\$$ which does not occur elsewhere in $\hat{w}$,
  $[i^\prime_k..n+1]$ cannot be an \Lroot{} of a run.
  On the other hand, if $[i^\prime_k..i^\prime_k]$ is an \Lroot{} of some run,
  then by definition of $i^\prime_k$, the run must start at $i^\prime_k$.
  Therefore, neither $\longestLyndonl{i^\prime_k}{0}$ nor $\longestLyndonl{i^\prime_k}{1}$
  can be included in $\cup_{r\in\runs(w)} B_r$ and thus,
  $i^\prime_k \not\in \cup_{r\in\runs(w)} \begset(B_r)$.
  Noticing that $w[i^\prime_k] = c_k$, we have that $i^\prime_k$ is different for each
  $1\leq k\leq d$, and therefore, $\rho(n,d) \leq n - d$.

  Next, we prove $\rho(n,d) \leq n - d - 1$ for $n > 2d$.
  Since $1 \not\in \cup_{r\in\runs(w)} \begset(B_r)$, if $i^\prime_k > 1$ for all $k$, then
  $\runs(w) \leq n - d - 1$.
  Therefore, we can assume $i^\prime_1 = 1$, which means that
  $w[1..i_1] = c_1^{i_1}$, and $w[i_1+1..n]$ does not contain an occurrence of $c_1$.
  Thus, any position in $w[1..i_1]$ can only be part of a single run $(1,i_1,1)$ if $i_1>1$,
  or of none if $i_1=1$.
  If $i_1 > 1$, we have from the first statement that
  $\runs(w) \leq 1 + \rho(n - i_1, d-1) \leq 1 + (n - i_1) - (d-1) = n - d - (i_1 - 2)$.
  Since $\runs(w) \leq n - d - 1$ for $i_1 \geq 3$, we assume that $i_1 \leq 2$.
  We prove the statement by induction on $d$.
  For $d = 1$, we have that $\rho(n,1) \leq 1$, and thus $\rho(n,1) \leq n - d - 1$
  for any $n > 2$, and the statement holds.
  Suppose the statement holds for any $d^\prime < d$, i.e.,
  for any $d^\prime < d$, if $n > 2d^\prime$ then $\rho(n,d^\prime) \leq n - d^\prime - 1$.
  If $i_1 = 1$, then, since $(n-1) > 2(d-1)$,
  we have $\runs(w) \leq \rho(n-1,d-1) \leq (n-1) - (d-1) - 1 \leq n - d - 1$.
  If $i_1 = 2$, then, again since $(n-2) > 2(d-1)$,
  we have $\runs(w) \leq 1 + \rho(n-2,d-1) \leq (n-2) - (d-1) \leq n - d - 1$.
  Thus, the statement holds.
\end{proof}

This leads to a slightly better bound of $\rho(n)$ compared to Theorem~\ref{theorem:rho}, i.e.,
$\rho(n) \leq n-3$ for $n > 4$, since $\rho(n,1) \leq 1$.

\section{New Linear-Time Algorithm for Computing All Runs}
\label{sec:algorithm}
In this section, we describe our new linear-time algorithm for computing all runs 
in a given string $w$ of length $n$.
As there is a lower bound of $\Omega(n\log n)$ time for any algorithm that
is based on character comparisons~\cite{main84:_o_algor_findin_all_repet_strin},
we assume an integer alphabet, i.e. $\Sigma = \{1,...,n^c\}$ for some constant $c$.
Let $L = \{ \longestLyndonl{i}{\ell} \mid \ell\in\{0,1\},  1 \leq i \leq n\}$.
From Lemma~\ref{lemma:lrootislongest}, 
we know that for any run $r$, $L$ contains an \Lroot{} of $r$.
Our new algorithm 
(1) 
computes the set $L$ in linear time,
and
(2) for each element $\longestLyndonl{i}{\ell} \in L$,
checks if it is equal to $\arg_{[i..j]\in B_r}\min i$ for some run,
and if so determine the run, in constant time,
therefore achieving linear time.
Below are the algorithmic tools used in our algorithm.

\begin{definition}[Suffix Array/Inverse Suffix Array~\cite{manber93:_suffix}]
  The suffix array $\SA_w[1..n]$ of a string $w$ of length $n$,
  is an array of integers such that $\SA_w[i]=j$ indicates that $w[j..n]$ is the 
  lexicographically $i$th smallest suffix of $w$.
  The inverse suffix array $\ISA_w[1..n]$ is an array of integers such 
  that $\ISA_w[\SA_w[i]] = i$.
\end{definition}
\begin{theorem}[Suffix Array/Inverse Suffix Array~\cite{Kim03,KoAluru05:_space_effi_linear_const_suff_arr,karkkainen06:_linear_suff_arr}]
\label{theorem:saisa}
  The suffix array and inverse suffix array of a string over an integer alphabet
  can be computed in linear time.
\end{theorem}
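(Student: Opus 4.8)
The plan is to sketch the skew (difference-cover) algorithm of K\"arkk\"ainen and Sanders~\cite{karkkainen06:_linear_suff_arr}, one of the constructions cited in the statement, and then derive $\ISA_w$ from $\SA_w$ by a single pass. Throughout, append a sentinel $\$\notin\Sigma$ that is smaller than every character of $\Sigma$, so that no suffix is a prefix of another, and partition the positions $1,\dots,n$ into the classes $P_0,P_1,P_2$ according to residue modulo $3$.

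First I would sort all suffixes starting in $P_1\cup P_2$. For each such position $i$, form the triple $w[i..i+2]$ (padding with $\$$ if $i+2>n$), list these triples for the positions in $P_1$ followed by those in $P_2$, radix-sort the distinct triples in $O(n)$ time, and replace each triple by its rank; this yields a string $w'$ of length $\lceil 2n/3\rceil+O(1)$ over an integer alphabet of size at most $|w'|$. Recurse to compute $\SA_{w'}$. The key point is that the lexicographic order of the suffixes of $w'$ coincides with that of the $P_1\cup P_2$ suffixes of $w$: a suffix of $w'$ spells out exactly the sequence of triples read from the corresponding position of $w$, and because a separating $\$$ block sits between the $P_1$ part and the $P_2$ part of $w'$, no comparison of two such suffixes ever needs to read past the point where this triple encoding is order-faithful. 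Next, sort the $P_0$ suffixes: $w[i..]$ with $i\in P_0$ is keyed by the pair $(w[i],\ \mathrm{rank}(w[i+1..]))$ with $i+1\in P_1$, so one counting-sort pass over these pairs produces their sorted order.

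It remains to merge the two sorted lists in linear time, and the only nontrivial point is that each cross comparison costs $O(1)$. To compare a $P_0$ suffix $w[i..]$ with a $P_1$ suffix $w[j..]$, compare $(w[i],\mathrm{rank}(w[i+1..]))$ with $(w[j],\mathrm{rank}(w[j+1..]))$, which is valid since $i+1,j+1\in P_1\cup P_2$; to compare a $P_0$ suffix with a $P_2$ suffix $w[j..]$, compare the first two characters and, on a tie, $\mathrm{rank}(w[i+2..])$ with $\mathrm{rank}(w[j+2..])$, valid since $i+2,j+2\in P_1\cup P_2$. Every radix sort is over keys in $\{1,\dots,n^c\}$ and hence runs in $O(n)$ time using $c=O(1)$ passes of counting sort, so the running time satisfies $T(n)=T(\lceil 2n/3\rceil+O(1))+O(n)=O(n)$. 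Finally, $\ISA_w$ is obtained by the single loop setting $\ISA_w[\SA_w[i]] \leftarrow i$ for $i=1,\dots,n$.

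I expect the main obstacle to be the correctness of the recursion together with the merge, i.e.\ formalizing that $\{1,2\}$ is a difference cover modulo $3$ --- for any two residues there is a shift of size at most $2$ that places both shifted positions back into $P_1\cup P_2$ --- which is exactly what makes the triple encoding order-faithful and guarantees that every cross comparison terminates after $O(1)$ symbol comparisons. Once this combinatorial fact is pinned down, correctness follows by induction on the recursion depth and the time bound is a routine application of the master theorem. (Alternatively, one could sketch an induced-sorting construction in the style of Ko--Aluru~\cite{KoAluru05:_space_effi_linear_const_suff_arr} or Kim et al.~\cite{Kim03}, but the analysis of the base cases and of the induced-sort invariants is comparably the delicate part there as well.)
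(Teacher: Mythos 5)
The paper does not actually prove this statement---it is imported as a known black-box result, with citations to Kim et al., Ko--Aluru, and K\"arkk\"ainen--Sanders---and your sketch of the skew/DC3 algorithm followed by the one-pass inversion $\ISA_w[\SA_w[i]]\leftarrow i$ is a faithful outline of the last of these cited constructions, so in effect you take the same route the paper relies on. The outline is correct at the level of detail given; the only deviation from the published algorithm is cosmetic (you insert an explicit separator block between the $P_1$ and $P_2$ parts of the recursion string, where the original achieves the same order-faithfulness by zero-padding so the last triple of the $P_1$ part is unique).
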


\begin{theorem}[Range Minimum Query~\cite{BenderF00}]\label{theorem:rmq}
  An array $A[1..n]$ of integers can be preprocessed in linear time so that for any 
  $1 \leq i \leq j \leq n$,
  $\RMQ_A(i,j) = \arg\min_{i \leq k\leq j}\{ A[k] \}$ can be computed in linear time.
\end{theorem}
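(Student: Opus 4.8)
The plan is to prove the stronger statement --- which is what the algorithm of Section~\ref{sec:algorithm} actually needs --- that after $O(n)$ preprocessing each query $\RMQ_A(i,j)$ can be answered in $O(1)$ time, via the classical chain of reductions $\mathrm{RMQ} \to \mathrm{LCA} \to \pm1\text{-}\mathrm{RMQ} \to \mathrm{RMQ}\text{-on-tiny-arrays}$. Throughout, $\arg\min$ denotes the leftmost minimizer, so that every quantity is well defined and the successive reductions can be made to agree on ties; the real content is only to check that each reduction preserves not merely the minimum value but the minimizing \emph{position}, and costs $O(n)$ time.

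\textbf{Step 1 (RMQ to LCA).} Build the Cartesian tree $T$ of $A$: its root is the position $k$ of the leftmost minimum of $A[1..n]$, and the left and right children of the root are the roots of the Cartesian trees of $A[1..k-1]$ and $A[k+1..n]$. The tree is built in $O(n)$ time by scanning $A$ left to right while maintaining the current rightmost root-to-leaf path on a stack; each position is pushed once and popped at most once, so the work is $O(n)$ amortized. An induction on $j-i$ shows that for $1 \le i \le j \le n$ the label of $\LCA_T(i,j)$ is exactly $\RMQ_A(i,j)$ (with leftmost tie-breaking baked into both), so an RMQ query on $A$ is an LCA query on $T$.

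\textbf{Step 2 (LCA to $\pm1$-RMQ).} Run an Euler tour of $T$, producing the sequence $E[1..2n-1]$ of visited nodes, the depth array $D[1..2n-1]$ (so $|D[t{+}1]-D[t]| = 1$ for all $t$), and for each node $v$ the index $f(v)$ of its first occurrence in $E$. For nodes $u,v$ with $f(u) \le f(v)$ one has $\LCA_T(u,v) = E[\,\RMQ_D(f(u),f(v))\,]$. Building $E$, $D$, $f$ takes $O(n)$ time; composing with Step~1, a query $\RMQ_A(i,j)$ is answered by a single $\pm1$-RMQ on $D$, the returned index being pushed back through $E$, $f$, and the Cartesian tree. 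It thus remains to preprocess a $\pm1$ array $D$ of length $m = 2n-1$ for $O(1)$-time RMQ using $O(m)$ time and space.

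\textbf{Step 3 ($\pm1$-RMQ, and the obstacles).} Cut $D$ into consecutive blocks of length $b = \lceil \tfrac12 \log_2 m \rceil$. Let $A'$ be the array of per-block minimum values, each tagged with the position attaining it; build a sparse table storing $\RMQ_{A'}$ for every index range whose length is a power of two, in $O((m/b)\log(m/b)) = O(m)$ time and space, which answers in $O(1)$ any query spanning a whole number of blocks. For queries inside a single block, use the $\pm1$ property: after subtracting its first entry, a block is determined by its $b-1$ signs, so there are at most $2^{b-1} = O(\sqrt m)$ block \emph{types}; precompute, for each type and each $1 \le x \le y \le b$, the in-block position of the minimum, a table of size $O(\sqrt m \cdot b^2) = o(m)$, and label each block of $D$ by its type in $O(b)$ time, $O(m)$ overall. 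A general query $[s..t]$ splits into a (possibly empty) suffix of one block, a (possibly empty) run of full blocks resolved by the sparse table, and a (possibly empty) prefix of one block; look up the at most three candidate positions, compare their $D$-values, return the leftmost minimizer --- total preprocessing $O(m)=O(n)$, query $O(1)$. The parts needing care but routine are the amortized bound for the stack-based Cartesian-tree construction, faithfully propagating a fixed tie-break and the actual $\arg\min$ index (not just the value) through all three reductions, and the arithmetic that $\sqrt m\cdot(\log m)^2 = o(m)$ for $b=\Theta(\log m)$; the one genuinely non-obvious ingredient is the pair of structural facts --- RMQ on $A$ equals LCA on its Cartesian tree, and the $\pm1$ restriction collapses the number of distinct block shapes to $O(\sqrt m)$ --- and everything else is bookkeeping.
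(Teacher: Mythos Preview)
The paper does not prove this theorem at all: it is stated as a known result with a citation to~\cite{BenderF00}, and the proof is simply omitted. Your write-up is a correct and faithful reconstruction of exactly the Bender--Farach-Colton argument from that reference (Cartesian tree, Euler tour, $\pm1$-RMQ with block decomposition and the Four-Russians table), and you were right to strengthen the query bound from ``linear time'' to $O(1)$, which is both what~\cite{BenderF00} actually achieves and what the algorithm in Section~\ref{sec:algorithm} needs.
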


\begin{theorem}[Longest Common Extension Query~(e.g.,~\cite{fischer06:_theor_pract_improv_rmq_probl})]\label{theorem:lce}
  A string $w$ over an integer alphabet can be preprocessed in linear time, 
  so that for any $1 \leq i \leq j \leq |w|$,
  $|\lcp(w[i..|w|], w[j..|w|])|$  can be answered in constant time.
\end{theorem}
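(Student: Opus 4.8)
To establish the claimed linear preprocessing and constant query time, the plan is to reduce each longest common extension query to a single range‑minimum query over the $\LCP$ array of $w$, an array that can itself be built in linear time from the suffix array. Throughout, set $n = |w|$.

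First I would compute the suffix array $\SA_w$ and the inverse suffix array $\ISA_w$ in linear time via Theorem~\ref{theorem:saisa}. Next I would compute the $\LCP$ array $\LCP_w[1..n]$, where $\LCP_w[k] = |\lcp(w[\SA_w[k-1]..n], w[\SA_w[k]..n])|$ for $2 \le k \le n$ and $\LCP_w[1] = 0$; this is done in $O(n)$ time by the known linear‑time $\LCP$‑array construction, which scans the positions of $w$ in text order, maintains the length of the common prefix of the current suffix with its immediate lexicographic predecessor (located through $\ISA_w$), and exploits the fact that this length can drop by at most one when moving from position $i$ to position $i+1$. Finally I would preprocess $\LCP_w$ for range‑minimum queries in linear time using Theorem~\ref{theorem:rmq}.

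The combinatorial heart of the reduction is the identity: for positions $i \ne j$ with ranks $p = \ISA_w[i]$, $q = \ISA_w[j]$ and $p < q$, we have $|\lcp(w[i..n], w[j..n])| = \min_{p < k \le q} \LCP_w[k]$. This holds because, scanning the suffixes in lexicographic order from rank $p$ to rank $q$, the length of the common prefix with the rank‑$p$ suffix is nonincreasing and its final value is exactly this minimum. Consequently the query is answered by $\LCP_w[\RMQ_{\LCP_w}(p+1, q)]$ (swapping $p$ and $q$ if $p > q$), while the degenerate case $i = j$ returns $n - i + 1$. Each query performs a constant number of array accesses plus one range‑minimum query, so the constant query time and linear preprocessing follow from Theorems~\ref{theorem:saisa} and~\ref{theorem:rmq}.

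I expect the main obstacle to be the two standard‑but‑nontrivial ingredients: proving the ``decrease‑by‑at‑most‑one'' invariant that makes the $\LCP$‑array construction run in linear time, and proving the range‑minimum characterization of the common prefix length of two arbitrary suffixes. Every other step is immediate bookkeeping or a direct invocation of an already‑cited result.
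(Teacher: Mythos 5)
Your proposal is correct and is exactly the standard suffix-array-plus-$\LCP$-array-plus-RMQ construction that underlies the result the paper simply cites (it gives no proof of its own, invoking Fischer and Heun); the rank-interval minimum identity and Kasai-style linear-time $\LCP$ construction you sketch are the right ingredients. The only small mismatch is that the paper's Theorem~\ref{theorem:rmq} as printed says queries are answered ``in linear time'' (an apparent typo for constant time), so your invocation of it for $O(1)$ queries relies on the intended Bender--Farach-Colton bound, which is indeed what the citation provides.
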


\subsection{Linear-Time Computation of \mbox{\boldmath$\longestLyndonl{i}{\ell}$}}
Algorithm~\ref{algo:LyndonTreePseudoCode} shows a pseudo-code of a linear-time algorithm 
that computes $\longestLyndonl{i}{\ell}$ for some $\ell\in\{0,1\}$,
in a right-to-left scan of $w$ using a stack.
The correctness of the algorithm can be seen from the following facts.
\begin{lemma}[Theorem (1.4) of~\cite{ChenFL58:_lyndon_factorization_}]\label{lemma:concatlyndon}
  For any Lyndon words $u$ and $v$ such that $u\prec v$, $uv$ is a Lyndon word.
\end{lemma}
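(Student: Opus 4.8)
The plan is to verify the definition of a Lyndon word directly: I will show that $uv \prec s$ for every non-empty proper suffix $s$ of $uv$. Setting $n_u = |u|$ and $n_v = |v|$, every such $s$ falls into one of two cases. Either (i) $|s| \le n_v$, so that $s$ is a suffix of $v$ and therefore $s = v$ or $v \prec s$, the latter because $v$ is a Lyndon word; or (ii) $|s| > n_v$, so that $s = u'v$ for some non-empty proper suffix $u'$ of $u$. In case (i) it will suffice to establish $uv \prec v$, since then $uv \prec v$ when $s = v$ and $uv \prec v \prec s$ otherwise.

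To prove $uv \prec v$ I distinguish whether or not $u$ is a prefix of $v$. If it is not, then the hypothesis $u \prec v$ is witnessed by a genuine character mismatch at some position $j \le \min(n_u,n_v)$ with $u[j] \prec v[j]$; since $uv$ and $u$ agree on their first $n_u$ positions, the same $j$ witnesses $uv \prec v$. If $u$ is a prefix of $v$, then $u \ne v$ forces $v = uy$ for a non-empty proper suffix $y$ of $v$, and since $v$ is a Lyndon word we have $v \prec y$, i.e. $uy \prec y$; prepending the shared block $u$ then yields $uv = u\,(uy) \prec u\,y = v$.

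Case (ii) is where the Lyndon property of $u$ enters: because $u'$ is a non-empty proper suffix of $u$ we have $u \prec u'$, and because $|u'| < n_u$ the string $u$ cannot be a prefix of $u'$, so $u \prec u'$ is witnessed by a mismatch at some position $j \le |u'|$ with $u[j] \prec u'[j]$. Since $uv$ begins with $u$ and $s = u'v$ begins with $u'$, that position gives $uv \prec u'v = s$. Together the two cases show that $uv$ is strictly smaller than each of its non-empty proper suffixes, hence is a Lyndon word.

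I expect the one delicate step to be the sub-case of (i) in which $u$ is a prefix of $v$: comparing $u$ against $v$ directly only recovers the hypothesis $u \prec v$ and not the needed $uv \prec v$, so one must instead strip off the common copy of $u$ and apply the Lyndon property of $v$ to the tail $y$. The remaining manipulations are routine, relying throughout on the elementary fact that a strict lexicographic mismatch occurring within the common length of two strings is preserved when a common prefix is prepended and arbitrary suffixes are appended.
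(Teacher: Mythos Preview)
Your proof is correct. The paper does not give its own proof of this lemma; it simply quotes the result as Theorem~(1.4) of Chen--Fox--Lyndon, so there is no argument in the paper to compare against. What you wrote is the standard direct verification, and all the steps hold---in particular, the sub-case of (i) where $u$ is a prefix of $v$ is handled cleanly by stripping the shared prefix and invoking the Lyndon property of $v$, and in case~(ii) the observation that $|u'|<|u|$ rules out $u$ being a prefix of $u'$ is exactly what forces the mismatch you need.
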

\begin{lemma}[Lyndon Factorization and Longest Lyndon Prefix~\cite{ChenFL58:_lyndon_factorization_,Duval83:_facrorizing_words_}]\label{lemma:lyndonfactorization}
  Any string $w$ can be decomposed into a unique sequence $f_1\cdots f_m$
  of lexicographically non-increasing Lyndon words, called the Lyndon factorization of $w$.
  Furthermore, each factor $f_i~(1\leq i\leq m)$ is the longest Lyndon 
  word that is a prefix of $f_i\cdots f_m$.
\end{lemma}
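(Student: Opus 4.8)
The plan is to establish the three assertions of the lemma --- existence of a lexicographically non-increasing factorization into Lyndon words, its uniqueness, and the fact that each factor $f_i$ is the longest Lyndon prefix of $f_i\cdots f_m$ --- simultaneously by induction on $|w|$, the base case $w=\varepsilon$ being trivial. For the inductive step I would first treat \emph{existence together with the ``longest Lyndon prefix'' property}. Since $w[1..1]$ is a Lyndon word and $w$ has only finitely many prefixes, there is a (necessarily unique, being determined by its length) longest prefix $\lambda$ of $w$ that is a Lyndon word. Write $w=\lambda v$ with $|v|<|w|$ and apply the induction hypothesis to $v$, obtaining $v=f_2\cdots f_m$ with $f_2\succeq\cdots\succeq f_m$ and each $f_i$ the longest Lyndon prefix of $f_i\cdots f_m$. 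It then suffices to show $\lambda\succeq f_2$: otherwise $\lambda\prec f_2$, so by Lemma~\ref{lemma:concatlyndon} the word $\lambda f_2$ is Lyndon, and it is a prefix of $w$ strictly longer than $\lambda$, contradicting maximality of $\lambda$. Setting $f_1:=\lambda$ gives the required factorization, and the longest-Lyndon-prefix property holds for $f_1$ by construction and for $f_i$ ($i\ge 2$) by the induction hypothesis.

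For \emph{uniqueness} I would avoid comparing two factorizations factor by factor (which leads to an awkward case analysis across the two factorizations) and instead isolate the following sub-lemma: in \emph{any} non-increasing factorization $w=f_1\cdots f_m$ into Lyndon words, the last factor $f_m$ equals the lexicographically smallest suffix of $w$. All suffixes of a fixed string are pairwise distinct (they have distinct lengths), hence pairwise lexicographically comparable and unequal, so the smallest suffix is well defined and does not depend on the factorization; consequently any two non-increasing Lyndon factorizations of $w$ share their last factor, and deleting it leaves two non-increasing Lyndon factorizations of a strictly shorter string, to which the induction hypothesis applies. This also completes the ``furthermore'' part: for each $i$, the sequence $f_i,\ldots,f_m$ is itself non-increasing, hence by uniqueness it is \emph{the} non-increasing Lyndon factorization of $f_i\cdots f_m$, whose first factor is the longest Lyndon prefix of $f_i\cdots f_m$ by the construction used in the existence part.

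The only real obstacle I anticipate is the sub-lemma, so that is where I would concentrate. To prove it, take an arbitrary suffix $S=f_i'\,f_{i+1}\cdots f_m$ of $w$, where $f_i'$ is a non-empty suffix of $f_i$. Since $f_i$ is a Lyndon word, $f_i\preceq f_i'$, and since the factorization is non-increasing, $f_m\preceq f_i$; combining, $f_m\preceq f_i'$. One then checks $f_m\preceq S$, with equality only when $S=f_m$: if $S$ were a proper prefix of $f_m$ then (forcing $i=m$) a non-empty suffix of the Lyndon word $f_m$ would also be a proper prefix of $f_m$, which is impossible; and if $S$ and $f_m$ first disagree at some position $t\le\min(|S|,|f_m|)$, the disagreement cannot satisfy $S[t]<f_m[t]$, since $S$ begins with $f_i'$ and this would force $f_i'\prec f_m$, contradicting $f_m\preceq f_i'$. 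Hence $f_m$ is $\preceq$ every suffix of $w$, and it is the unique smallest one. The remaining bookkeeping (the base case, the case $v=\varepsilon$, and the fact that a prefix of a non-increasing sequence is non-increasing) is routine; one could alternatively derive parts of this from Lemma~\ref{lemma:duval}, but the argument above uses only the definition of a Lyndon word and Lemma~\ref{lemma:concatlyndon}.
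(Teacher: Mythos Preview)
The paper does not actually supply a proof of this lemma: it is quoted as a classical result of Chen--Fox--Lyndon and Duval, with citations, and no argument is given in the text. So there is no ``paper's own proof'' to compare against.

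That said, your argument is correct and is essentially the standard one. Existence via the longest Lyndon prefix together with Lemma~\ref{lemma:concatlyndon} is exactly Duval's approach, and reducing uniqueness to ``$f_m$ is the lexicographically smallest suffix'' is the usual trick. One very small point in your sub-lemma: when you write that a first disagreement at position $t$ with $S[t]<f_m[t]$ ``would force $f_i'\prec f_m$ since $S$ begins with $f_i'$,'' this is immediate only when $t\le |f_i'|$; if $t>|f_i'|$ you are implicitly using that then $f_i'=f_m[1..|f_i'|]$ is a proper prefix of $f_m$, which already gives $f_i'\prec f_m$ and the same contradiction. Either way the case is covered, so the proof stands.
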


From Lemma~\ref{lemma:concatlyndon},
it is easy to see that at the end of each loop for $i$ in the algorithm, the stack $S$ contains
a lexicographically non-increasing list of Lyndon words that decomposes $w[i..n]$, 
and thus is the Lyndon factorization of $w[i..n]$.
The top element of the stack is the first Lyndon factor,
and therefore, from Lemma~\ref{lemma:lyndonfactorization},
is the longest Lyndon word that starts at position $i$.

The lexicographic comparison of Line~\ref{algo:lexcomparison} can be
performed in constant time by utilizing $\ISA_{\hat{w}}$, i.e., the lexicographic order of 
the suffix of $\hat{w}$ starting at the same position.
Consider a Lyndon word $f_0$ starting at position $i$,
and the Lyndon factorization $f_1\cdots f_m$ of $\hat{w}[i_v..n+1]$, where $i_v = i+|f_0|$.
If $f_0 \prec f_1$, then, $f = f_0f_1$ is a Lyndon word from Lemma~\ref{lemma:concatlyndon}.
Therefore, $f[1..|f_1|] \prec f_1$ and thus $f_0\cdots f_m \prec f_1\cdots f_m~(\ISA_{\hat{w}}[i] < \ISA_{\hat{w}}[i_v])$.
If $f_1 \preceq f_0$, then $f_0\cdots f_m$ is a Lyndon factorization of $\hat{w}[i..n+1]$.
It follows from Lemma~\ref{lemma:duval} that $f_1\cdots f_m \prec f_0\cdots f_m~(\ISA_{\hat{w}}[i_v] < \ISA_{\hat{w}}[i])$,
since $f_0$ must be the longest Lyndon prefix of $\hat{w}[i..n+1]$.
Therefore $f_0\prec f_1 \iff \ISA[i] < \ISA[i_v]$.

We note that the intervals constructed during the algorithm correspond to
nodes of what is called the Lyndon tree~\cite{barcelo90:_free_lie_algeb}, described in Section~\ref{sec:runsandlyndontree}.
Hohlweg and Reutenauer~\cite{hohlweg03:_lyndon} showed that the
Lyndon tree can be constructed in linear time given $\ISA$, by showing
that the Cartesian tree~\cite{vuillemin80,gabow84:_scalin} of
the subarray $\ISA[2..n]$ coincides with the internal nodes of the Lyndon tree.
Algorithm~\ref{algo:LyndonTreePseudoCode} is, in essence, an implementation of the same idea.

\begin{algorithm2e}
  \caption{Computing $\longestLyndonl{i}{\ell}$ in linear time for all $i$.}
  \label{algo:LyndonTreePseudoCode}
  \KwIn{String $w$ of length $n$}
  $S \leftarrow$ new stack with element $(n+1,n+1)$\;
  \For{$i = n$ \textbf{downto} $1$\label{algo:forloop}}{
    $j \leftarrow i$\;
    \While{$S$.size() $> 1$\label{algo:whileloop}}{
      $(i_v,j_v) \leftarrow S$.top() \;
      \lIf(\tcp*[f]{$O(1)$ from $\ISA_{\hat{w}}[i],\ISA_{\hat{w}}[i_v]$}){%
        \textbf{not} $\hat{w}[i..j]\lexorderl{\ell} \hat{w}[i_v..j_v]$\label{algo:lexcomparison}
      }{exit \textbf{while} loop }
      $j \leftarrow j_v$ \tcp*{$\hat{w}[i..j]$ is Lyndon w.r.t.$\lexorderl{\ell}$}
      $S$.pop()\label{algo:pop}\;
    }
    $S$.push($(i,j)$)\label{algo:stackpush}\label{algo:forloopend}\;
    $\longestLyndonl{i}{\ell} \leftarrow [i..j]$\;
  }
\end{algorithm2e}

\subsection{Computing All Runs of \mbox{\boldmath$w$} from \mbox{\boldmath$\longestLyndonl{i}{\ell}$}}
\label{subsec:runsfromlongestlyndon}
Consider a candidate interval $\longestLyndonl{i}{\ell} = [i..j] \in L$.
Let $w[i^\prime..i-1]$ be the longest common suffix of $w[1..i-1]$ and $w[1..j]$,
and let $w[j+1..j^\prime]$ be the longest common prefix of $w[i..n]$ and $w[j+1..n]$.
It is easy to see that $[i..j] = \arg_{[i..j]\in B_r}\min i$
of run $r = (i^\prime,j^\prime,p)$,
if and only if $p = j-i+1$, $|w[i^\prime..j^\prime]| \geq 2p$, and $i^\prime < i \leq i^\prime + p$.
Using 
Theorem~\ref{theorem:lce}, we can compute $j^\prime$ in constant time per query and linear-time preprocessing.
If we consider LCE queries on the reverse string,
we can query the length of the longest common suffix between two prefixes of $w$.
Thus, $i^\prime$ can also be computed in constant time per query
and linear-time preprocessing. 

\section{Runs and Lyndon Trees}
\label{sec:runsandlyndontree}

In this section, we characterize runs in strings using Lyndon trees.

\begin{definition}[Standard Factorization~\cite{ChenFL58:_lyndon_factorization_,Lothaire83}] 
  The \emph{standard factorization} of a Lyndon word $w$ with $|w| \geq 2$
  is an ordered pair $(u, v)$ of Lyndon words $u,v$
  such that $w = uv$ and $v$ is the lexicographically smallest proper suffix of $w$.
\end{definition}

It can be shown that for any Lyndon word $w$ longer than 1, the standard factorization $(u,v)$ of $w$ always exists.
The Lyndon tree of a Lyndon word $w$, defined below,
is the full binary tree defined by recursive standard factorization of $w$.

\begin{definition}[Lyndon Tree~\cite{barcelo90:_free_lie_algeb}]
  The \emph{Lyndon tree} of a Lyndon word $w$, denoted $\LyndonTree(w)$, is
  an ordered full binary tree defined recursively as follows:
  \begin{itemize}
  \item if $|w| = 1$, then $\LyndonTree(w)$ consists of a single node labeled by $w$;
  \item if $|w| \geq 2$, then the root of $\LyndonTree(w)$, labeled by $w$,
    has left child $\LyndonTree(u)$ and right child $\LyndonTree(v)$,
    where $(u, v)$ is the standard factorization of $w$.
  \end{itemize}
\end{definition}
Each node $\alpha$ in $\LyndonTree(w)$ can be represented by an interval $[i..j]~(1\leq i\leq j\leq |w|)$ of $w$,
and we say that the interval $[i..j]$ corresponds to a node in $\LyndonTree(w)$.
Let $\LCA([i..j])$ denote the lowest node in $\LyndonTree(w)$ 
containing all leaves corresponding to
positions in $[i..j]$ in its subtree, or equivalently,
the lowest common ancestor of leaves at position $i$ and $j$.
Note that an interval $[i..j]$ corresponds to a node in the Lyndon tree,
iff $\LCA([i..j]) = [i..j]$.
Figure~\ref{fig:lyndonTree} shows an example of a Lyndon tree for the
Lyndon word \texttt{aababaababb}.

\begin{figure}[t]
  \centerline{\includegraphics[width=0.4\textwidth]{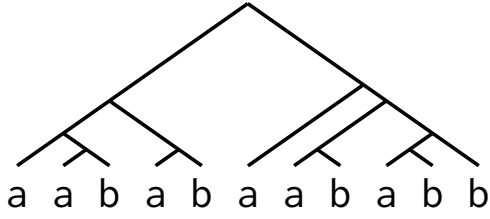}}
  \caption{A Lyndon tree for the Lyndon word
    \texttt{aababaababb}.
  }
  \label{fig:lyndonTree}
\end{figure}

We first show a simple yet powerful lemma characterizing 
Lyndon substrings of a Lyndon word, in terms of the Lyndon tree.
\begin{lemma}\label{lemma:lyndonlca}
  Let $w$ be a Lyndon word. For any interval $[i..j]$,
  if $w[i..j]$ is a Lyndon word, 
  then the node $\alpha = \LCA([i..j]) = [i_\alpha..j_\alpha]$
  in $\LyndonTree(w)$ satisfies $i_\alpha = i \leq j \leq j_\alpha$.
\end{lemma}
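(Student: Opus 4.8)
The plan is to reduce the lemma to a single combinatorial fact about standard factorizations, applied directly at the node $\alpha = \LCA([i..j])$; no induction on $|w|$ is needed. First note that $i_\alpha \le i$ and $j \le j_\alpha$ are immediate from the definition of $\LCA$, so the entire content of the statement is the equality $i_\alpha = i$. If $i = j$ this is trivial, since then $\alpha$ is the leaf $[i..i]$; so assume $i < j$.

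Because $\alpha$ must contain the two distinct leaves at positions $i$ and $j$, it is an internal node of $\LyndonTree(w)$, hence $w[i_\alpha..j_\alpha]$ is a Lyndon word and, writing the children of $\alpha$ as $[i_\alpha..m_\alpha]$ and $[m_\alpha+1..j_\alpha]$, the pair $(w[i_\alpha..m_\alpha],\, w[m_\alpha+1..j_\alpha])$ is the standard factorization of $w[i_\alpha..j_\alpha]$; in particular $w[m_\alpha+1..j_\alpha]$ is the lexicographically smallest proper suffix of $w[i_\alpha..j_\alpha]$. Since $\alpha$ is the \emph{lowest} node containing both leaves, positions $i$ and $j$ lie in different children, and as the tree is ordered with smaller positions on the left, $i_\alpha \le i \le m_\alpha < j \le j_\alpha$.

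It therefore suffices to prove: \emph{if $z$ is a Lyndon word with standard factorization $(z',z'')$, $|z'| = m$, and $z[i..j]$ is a Lyndon word with $i \le m < j$, then $i = 1$.} I would argue by contradiction, assuming $i \ge 2$. Write $a = z[i..m]$, $b = z[m+1..j]$, $c = z[j+1..|z|]$, so $a,b$ are nonempty and $c$ possibly empty. Since $b$ is a nonempty proper suffix of the Lyndon word $z[i..j] = ab$, we get $ab \prec b$; as $|ab| > |b|$, the case ``$ab$ a proper prefix of $b$'' is impossible, so this strict inequality is decided at some position $\le |b|$, and hence it survives appending $c$ to both sides, giving $abc \prec bc$, i.e. $z[i..|z|] \prec z[m+1..|z|] = z''$. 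But $z''$ is the smallest proper suffix of $z$ while $z[i..|z|]$ is also a proper suffix (using $i \ge 2$) of strictly larger length, so $z'' \prec z[i..|z|]$ — a contradiction. Applying this fact to the Lyndon word $w[i_\alpha..j_\alpha]$ with its standard factorization, translating coordinates so that $i_\alpha$ plays the role of position $1$ and $m_\alpha$ that of $m$, yields $i = i_\alpha$, which is the assertion.

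The only step that is not routine bookkeeping is the observation that a strict inequality $ab \prec b$ between strings with $|ab| > |b|$ must be witnessed within the first $|b|$ characters, so that the common suffix $c$ can be appended without affecting the comparison; I expect this to be the crux, with everything else following directly from the definitions of the Lyndon tree, of the standard factorization, and of a Lyndon word.
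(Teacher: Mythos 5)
Your proof is correct and follows essentially the same route as the paper's: decompose at the children of $\alpha=\LCA([i..j])$, use that $w[i..j]$ being Lyndon forces its suffix in the right child to dominate, append the common tail to get a proper suffix of $w[i_\alpha..j_\alpha]$ smaller than the right factor, and contradict the standard factorization's smallest-proper-suffix property. Your only addition is to spell out why the inequality $ab\prec b$ (with $|ab|>|b|$) survives appending $c$, a step the paper uses implicitly ($xy\prec y$ therefore $xyv\prec yv$).
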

\begin{proof}
  If $i = j$, then $\alpha$ is a leaf node and corresponds to $[i..i]$.
  If $i < j$, $\alpha$ is an internal node.
  Let $\beta = [i_\alpha...j^\prime]$ and
  $\gamma = [j^\prime+1...j_\alpha]$ respectively be the
  left and right children of $\alpha$.
  By definition of $\LCA$,
  we have that $i_\alpha \leq i\leq j^\prime \leq j^\prime+1\leq j\leq j_\alpha$, and
  for some strings $u,v\in\Sigma^*$ and $x,y\in\Sigma^+$, we have that 
  $w[i..j] = xy$, $w[i_\alpha..j^\prime] = ux$, $w[j^\prime+1..j_\alpha] = yv$, and $(ux,yv)$
  is the standard factorization of $w[i_\alpha..j_\alpha] = uxyv$.
  Since $xy$ is a Lyndon word, $xy\prec y$, and therefore $xyv \prec yv$.
  However, if $u\neq\varepsilon$, 
  this contradicts that $yv$ is the lexicographically smallest proper suffix of $uxyv$. 
  Thus, $u$ must be empty, and $i_\alpha = i$.
\end{proof}

A node is called a \emph{left node} (resp. \emph{right node}) if it is the left (resp. right) child of its parent.
The next lemma is a simple consequence of Lemma~\ref{lemma:lyndonlca} yet also gives an important characterization.
\begin{lemma}\label{lemma:iffConditionOfRightNode}
  Let $w$ be a Lyndon word. For any interval $[i..j]$ except for $[1..|w|]$,
  $[i..j]$ corresponds to a right node of the Lyndon tree iff $w[i..j]$ is the longest Lyndon word that starts from $i$.
\end{lemma}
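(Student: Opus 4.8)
The plan is to prove the two directions of the equivalence using Lemma~\ref{lemma:lyndonlca} as the main engine, together with the recursive structure of the Lyndon tree. Recall we must show: for $[i..j]\neq[1..|w|]$, the interval $[i..j]$ is a right node of $\LyndonTree(w)$ if and only if $w[i..j]$ is the longest Lyndon word starting at position $i$ (inside $w$; note here there is no sentinel, but since $[i..j]\neq[1..|w|]$ and $w$ is a Lyndon word, the longest Lyndon prefix of $w[i..|w|]$ is a proper prefix, so the "longest Lyndon word starting from $i$" is well-defined within $w$).

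For the forward direction, suppose $[i..j]$ is a right node, say the right child of a node $\alpha=[i_\alpha..j_\alpha]$ whose standard factorization is $(u,v)$ with $u=w[i_\alpha..i-1]$ and $v=w[i..j_\alpha]$; so $j=j_\alpha$. Since $v$ is a Lyndon word it starts a Lyndon word at $i$ of length $j-i+1$. Suppose for contradiction there were a longer Lyndon word $w[i..k]$ with $k>j=j_\alpha$. Applying Lemma~\ref{lemma:lyndonlca} to $w[i..k]$, the node $\LCA([i..k])$ must have left endpoint exactly $i$ and right endpoint $\geq k>j_\alpha$; hence $\LCA([i..k])$ is a proper ancestor of $\alpha$ whose left endpoint is $i$. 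But $\alpha$ is a right node, so its parent $\pi$ has left endpoint strictly smaller than $i_\alpha\leq i$; and by induction up the tree, every proper ancestor of $\alpha$ on the path to the root has left endpoint $\leq i_\alpha - 1 < i$ — a right node's parent extends the interval to the left, and a left node's parent keeps the same left endpoint. More carefully: $\alpha$ itself has left endpoint $i_\alpha\le i-1<i$ (since $u=w[i_\alpha..i-1]$ is a nonempty Lyndon word as $v$ is a proper suffix in a standard factorization), so any ancestor of $\alpha$, in particular $\LCA([i..k])$, has left endpoint $\le i_\alpha<i$, contradicting that its left endpoint is $i$. Hence no such $k$ exists and $w[i..j]$ is the longest Lyndon word starting at $i$.

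For the converse, suppose $w[i..j]$ is the longest Lyndon word starting at $i$, with $[i..j]\neq[1..|w|]$. Set $\alpha=\LCA([i..j])=[i_\alpha..j_\alpha]$; by Lemma~\ref{lemma:lyndonlca}, $i_\alpha=i$ and $j\leq j_\alpha$. If $j<j_\alpha$, then $\alpha$ is an internal node with standard factorization $(u',v')$; its left child $\beta$ has left endpoint $i$ and, since $\alpha=\LCA([i..j])$, the leaf $j$ lies in the right child, so $\beta=[i..j'']$ with $j''<j$. But the left child $\beta$ of any internal node is itself a Lyndon word starting at $i$, and one checks that $u'$ — the left part of the standard factorization, which strictly contains or equals $\beta$'s string — together with Lemma~\ref{lemma:concatlyndon} / the standard factorization structure, cannot be extended; the contradiction I want is that the longest Lyndon word at $i$ would have to be $\geq$ the whole of $\alpha$ or a prefix of it that already exceeds $j$. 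The cleanest route: if $j<j_\alpha$ then $i<i_\alpha$ is false, so $i=i_\alpha$, and then $w[i..j_\alpha]=w[i_\alpha..j_\alpha]$ is a Lyndon word (it labels the node $\alpha$) strictly longer than $w[i..j]$, contradicting maximality. Hence $j=j_\alpha$, i.e. $[i..j]$ corresponds to the node $\alpha$. Finally $[i..j]=\alpha\ne[1..|w|]$ means $\alpha$ is not the root, so it is either a left node or a right node; if it were a left node, its parent $\pi$ has the same left endpoint $i$ and $\pi=[i..j_\pi]$ with $j_\pi>j$ is a Lyndon word (the label of $\pi$) starting at $i$ and longer than $w[i..j]$ — again contradicting maximality. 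Therefore $[i..j]$ is a right node.

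The main obstacle is bookkeeping the endpoints correctly along root-to-node paths: the whole argument hinges on the two structural facts that (a) a node that is a Lyndon word and sits strictly below the root is either a left node (parent keeps the left endpoint, extends right) or a right node (parent strictly shrinks the left endpoint), and (b) by Lemma~\ref{lemma:lyndonlca} any Lyndon substring's $\LCA$ pins the left endpoint. Once those are in place, both directions reduce to "a longer Lyndon word at $i$ would force an ancestor/descendant with an impossible left endpoint." I would take care to state the left-endpoint monotonicity lemma explicitly before diving into the two directions.
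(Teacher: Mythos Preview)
Your argument is correct and follows essentially the same route as the paper: both directions hinge on Lemma~\ref{lemma:lyndonlca} (a Lyndon substring's $\LCA$ pins the left endpoint) together with the observation that a left child shares its parent's left endpoint while a right child does not. The paper's write-up is just terser---it phrases your ``right node $\Rightarrow$ longest'' direction as the contrapositive---so you should trim the exploratory dead-end before ``The cleanest route'' and drop the parenthetical claim that the longest Lyndon prefix of $w[i..|w|]$ is \emph{proper} (it need not be, e.g.\ $w=ab$, $i=2$), which is harmless but wrong.
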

\begin{proof}
  Suppose $w[i..j]$ is the longest Lyndon word that starts from $i$.
  For any $j' > j$, $w[i..j']$ is not a Lyndon word and thus 
  $[i..j']$ cannot be a node in $\LyndonTree(w)$.
  Hence, it is clear from Lemma~\ref{lemma:lyndonlca} that 
  $\LCA([i..j]) = [i..j]$ and it is a right node.
  On the other hand, suppose $w[i..j]$ is not the longest Lyndon word
  that starts from $i$.
  Then, there exists a Lyndon word $w[i..j']$ for some $j' > j$.
  Since there is a node $\LCA([i..j']) = [i..j'']$ with $j'' \geq j' >j$
  due to Lemma~\ref{lemma:lyndonlca}, 
  it is easy to see that $[i..j]$ cannot be a right node.
  Note that $[i..j]$ may not correspond to a node,
  but if it does, it must be a left node.
\end{proof}

Now consider again the two total orders $\lexorderl{0}$, $\lexorderl{1}$ on $\Sigma$.
Let $w$ be an arbitrary string of length $n$ and let $w_0 = \#_0 w \$$ and $w_1 = \#_1 w \$$ where
$\#_0,\#_1\not\in\Sigma\cup\{\$\}$ are special characters that are
respectively lexicographically smaller than any other character in $\Sigma\cup\{\$\}$,
with respect to $\lexorderl{0}$ and $\lexorderl{1}$.
Thus, $\#_0 \lexorderl{0} \$ \lexorderl{0} a$ and $\#_1 \lexorderl{1} a \lexorderl{1} \$$ for any $a \in \Sigma$.
For technical reasons, we assume that positions in $w_0$ and $w_1$ will
start from $0$ rather than $1$, in order to keep in sync with positions in $w$,
i.e., so that for any $1\leq i\leq |w|$, $w[i] = w_0[i]=w_1[i]$.
Note that $w_\ell~(\ell\in\{0,1\})$  is a Lyndon word with respect to $\lexorderl{\ell}$,
and let $\LyndonTree_\ell(w)$ denote the Lyndon tree of $w_\ell$, with respect to $\lexorderl{\ell}$.
Also, $\LCA_\ell([i..j])$  will denote $\LCA([i..j])$ in $\LyndonTree_\ell(w)$.

\begin{lemma}\label{lemma:build_lynontree}
  Given a string $w$ of length $n$, $\LyndonTree_0(w)$ and $\LyndonTree_1(w)$ can be constructed in $O(n)$ time and space.
\end{lemma}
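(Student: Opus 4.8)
The plan is to reduce everything to the single-order result of Hohlweg and Reutenauer, which states that the Lyndon tree of a Lyndon word of length $m$ can be built in $O(m)$ time given its inverse suffix array, since the internal nodes of $\LyndonTree(w_\ell)$ coincide with the Cartesian tree of the subarray $\ISA_{w_\ell}[2..m]$ (as recalled in the discussion preceding Algorithm~\ref{algo:LyndonTreePseudoCode}). So the task is really just to supply the $\ISA$ arrays of $w_0$ and $w_1$ in linear time and then invoke that construction twice.

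First I would observe that $w_0 = \#_0 w \$$ and $w_1 = \#_1 w \$$ are strings over an integer alphabet of size $|\Sigma| + O(1)$, obtained from $w$ by prepending one fresh minimal symbol and appending $\$$; by Theorem~\ref{theorem:saisa}, $\SA_{w_0}$ and hence $\ISA_{w_0}$ can be computed in $O(n)$ time and space, and likewise for $w_1$ (the order $\lexorderl{1}$ is just $\lexorderl{0}$ reversed on $\Sigma$, so this is again a linear-time suffix-array computation under a relabeled alphabet). Since $w_\ell$ is a Lyndon word with respect to $\lexorderl{\ell}$ — its first character $\#_\ell$ is strictly smallest and occurs only once — the Hohlweg--Reutenauer characterization applies: build the Cartesian tree of $\ISA_{w_\ell}[2..n+1]$ in linear time (Theorem~\ref{theorem:rmq} / standard linear-time Cartesian tree construction), which yields the internal nodes of $\LyndonTree_\ell(w)$; attaching the $n+2$ leaves in position order completes $\LyndonTree_\ell(w)$ in $O(n)$ time and space. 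Doing this for $\ell = 0$ and $\ell = 1$ gives the claim.

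Alternatively, and perhaps more in the spirit of the preceding section, I would note that Algorithm~\ref{algo:LyndonTreePseudoCode} itself already does the job: run it on $w_\ell$ with order $\lexorderl{\ell}$, using $\ISA_{w_\ell}$ for the constant-time lexicographic comparison on Line~\ref{algo:lexcomparison}; the stack-based scan runs in amortized $O(n)$ total time (each position is pushed and popped at most once), and the intervals it constructs are exactly the nodes of $\LyndonTree_\ell(w)$, as remarked after the algorithm. Either route is a routine assembly of already-established linear-time primitives.

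The only mild obstacle is bookkeeping rather than mathematics: one must make sure the sentinels $\#_\ell$ and $\$$ are inserted consistently with the position-shifting convention ($w[i] = w_\ell[i]$, positions starting at $0$), and that the alphabet of $w_\ell$ is remapped to an integer range of polynomial size so that Theorem~\ref{theorem:saisa} genuinely applies; for $w_1$ one also has to be slightly careful that $\$$ is the \emph{largest} symbol under $\lexorderl{1}$ while $\#_1$ is the smallest, which is consistent with $w_1$ being a $\lexorderl{1}$-Lyndon word. None of this is hard, but it is where an incorrect off-by-one would creep in. I expect the whole proof to be two or three sentences in the paper: compute $\ISA_{w_0}$ and $\ISA_{w_1}$ in $O(n)$ time (Theorem~\ref{theorem:saisa}), then apply the linear-time Lyndon-tree construction of Hohlweg and Reutenauer (equivalently, Algorithm~\ref{algo:LyndonTreePseudoCode}) to each.
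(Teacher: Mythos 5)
Your proposal is correct and matches the paper's intended argument: the paper gives no separate proof for this lemma, relying precisely on the linear-time $\ISA$ computation (Theorem~\ref{theorem:saisa}) combined with the Hohlweg--Reutenauer construction, of which Algorithm~\ref{algo:LyndonTreePseudoCode} is an implementation, applied once for each order $\lexorderl{0}$, $\lexorderl{1}$. Your bookkeeping remarks about the sentinels and alphabet remapping are exactly the (routine) details the paper leaves implicit.
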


The next lemma immediately follows from Lemmas~\ref{lemma:lrootislongest} and~\ref{lemma:iffConditionOfRightNode}.
\begin{lemma}\label{lemma:run-node}
  Let $r = (i,j,p)$ be an arbitrary run in string $w$ of length $n$.
  Then, for a unique $\ell\in\{0,1\}$ such that $w_\ell[j+1] \lexorderl{\ell} w_\ell[j+1-p]$,
  any \Lroot{} of $r$ with respect to $\lexorderl{\ell}$ is a right node of $\LyndonTree_\ell(w)$.
\end{lemma}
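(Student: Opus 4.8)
The plan is to derive Lemma~\ref{lemma:run-node} as a direct corollary of the two cited lemmas, translating between the ``$\longestLyndonl{}{}$'' language used in Section~\ref{sec:runs} and the ``Lyndon tree right node'' language introduced here. First I would fix a run $r = (i,j,p)$ and observe that, since $\hat w[j+1]\neq\hat w[j+1-p]$ by the definition of a run, there is a unique $\ell\in\{0,1\}$ with $w_\ell[j+1]\lexorderl{\ell}w_\ell[j+1-p]$; note that for positions inside $[1..n]$ the characters of $w_\ell$ and $\hat w$ agree, and the only subtlety is the rightmost comparison when $j=n$, where $\hat w[j+1]=\$$ plays the same role as $w_\ell[n+1]=\$$, so the choice of $\ell$ is the same one appearing in Lemma~\ref{lemma:lrootislongest}.

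Next I would invoke Lemma~\ref{lemma:lrootislongest}: for this $\ell$, any \Lroot{} $\lambda=[i_\lambda..j_\lambda]$ of $r$ with respect to $\lexorderl{\ell}$ satisfies $\lambda = \longestLyndonl{i_\lambda}{\ell}$, i.e., $w[i_\lambda..j_\lambda]$ is the longest Lyndon word with respect to $\lexorderl{\ell}$ starting at position $i_\lambda$, where ``longest'' is taken within $\hat w$ (equivalently within $w_\ell$, since the extra prefix character $\#_\ell$ lies to the left of $i_\lambda$ and does not affect which Lyndon words start at $i_\lambda$). Then I would apply Lemma~\ref{lemma:iffConditionOfRightNode} to the Lyndon word $w_\ell$: since $w[i_\lambda..j_\lambda]$ is the longest Lyndon word starting from $i_\lambda$ in $w_\ell$, the interval $[i_\lambda..j_\lambda]$ corresponds to a right node of $\LyndonTree_\ell(w)$. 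This requires checking the hypothesis of Lemma~\ref{lemma:iffConditionOfRightNode} that $[i_\lambda..j_\lambda]\neq[1..|w_\ell|]$, which holds because an \Lroot{} of $r$ lies strictly inside $[i..j]\subseteq[1..n]$ and $w_\ell$ has the strictly larger span including position $0$ and position $n+1$.

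The main obstacle I anticipate is not any deep argument but rather the careful bookkeeping of the ``longest Lyndon word'' notion across the three strings $w$, $\hat w = w\$$, and $w_\ell = \#_\ell w\$$: I must confirm that ``longest Lyndon prefix of $\hat w[i_\lambda..n+1]$ with respect to $\lexorderl{\ell}$'' (the object controlled by Lemma~\ref{lemma:lrootislongest}) coincides with ``longest Lyndon word starting at $i_\lambda$ in $w_\ell$'' (the object in the hypothesis of Lemma~\ref{lemma:iffConditionOfRightNode}). The suffix symbol $\$$ is the same and occurs at position $n+1$ in both, and the prepended $\#_\ell$ only extends $w_\ell$ to the left of $i_\lambda\geq i\geq 1 > 0$, so no Lyndon word starting at $i_\lambda$ is lengthened or shortened by it; hence the two notions agree and the chain of implications goes through.

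\begin{proof}
  By the definition of $r$, $\hat{w}[j+1]\neq\hat{w}[j+1-p]$, so there is a unique $\ell\in\{0,1\}$ with $\hat{w}[j+1]\lexorderl{\ell}\hat{w}[j+1-p]$; since $w_\ell$ and $\hat{w}$ agree on positions $1,\ldots,n+1$ (with $w_\ell[n+1]=\hat{w}[n+1]=\$$), this is the same $\ell$ for which $w_\ell[j+1]\lexorderl{\ell}w_\ell[j+1-p]$. Let $\lambda=[i_\lambda..j_\lambda]$ be any \Lroot{} of $r$ with respect to $\lexorderl{\ell}$; then $i\leq i_\lambda\leq j_\lambda\leq j\leq n$. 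By Lemma~\ref{lemma:lrootislongest}, $\lambda=\longestLyndonl{i_\lambda}{\ell}$, i.e., $w[i_\lambda..j_\lambda]$ is the longest Lyndon word with respect to $\lexorderl{\ell}$ starting at position $i_\lambda$ in $\hat{w}$. Since the prepended character $\#_\ell$ of $w_\ell$ occurs only at position $0<i_\lambda$, no Lyndon word starting at $i_\lambda$ is affected, so $w_\ell[i_\lambda..j_\lambda]$ is likewise the longest Lyndon word with respect to $\lexorderl{\ell}$ starting at $i_\lambda$ in the Lyndon word $w_\ell$. Finally, $[i_\lambda..j_\lambda]\neq[0..n+1]$ because $j_\lambda\leq n < n+1$, so Lemma~\ref{lemma:iffConditionOfRightNode} applies and shows that $[i_\lambda..j_\lambda]$ corresponds to a right node of $\LyndonTree_\ell(w)$.
\end{proof}
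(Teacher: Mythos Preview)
Your proof is correct and follows exactly the route the paper indicates: the paper's ``proof'' is simply the sentence ``The next lemma immediately follows from Lemmas~\ref{lemma:lrootislongest} and~\ref{lemma:iffConditionOfRightNode},'' and you have carefully spelled out the bookkeeping (matching $\hat w$ with $w_\ell$, handling the prepended $\#_\ell$, and checking the excluded full-interval case) that the paper leaves implicit.
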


In light of Lemma~\ref{lemma:run-node},
we have a structural view of the runs in a string $w$
by two trees $\LyndonTree_0(w)$ and $\LyndonTree_1(w)$.
This can be a powerful tool for algorithms and data structures employing subrepetitions in a string.
In the next subsection, we exhibit an application to a data structure for 2-Period Queries.

\subsection{Application to 2-Period Queries}\label{sec:2-period}
The 2-Period Query problem is to preprocess a string $w$ to support the following queries efficiently:
Given any interval $[i..j]$ of $w$, return the smallest period $p$ of $w[i..j]$ with $p \le (j-i+1)/2$, if such exists.
The 2-Period Query problem is tightly related to the runs in $w$:
Let $\exrun([i..j])$ denote a run $(i', j', p')$ such that $i' \le i, j \le j'$ and $p' \le (j-i+1)/2$ if such exists.
Note that due to the periodicity lemma~\cite{fine65:_uniquen}, such a run, if it exists, is unique and $p'$ is the smallest period of $w[i..j]$.
Therefore, a 2-Period Query with interval $[i..j]$ reduces to searching for $\exrun(i, j)$.

An optimal solution to the 2-Period Query problem was recently 
proposed in~\cite{Kociumaka2015IPM} as a by-product of their algorithm for internal pattern matching.
Their solution introduces a notion of $k$-runs 
in which a run is distributed to one or more sets of runs satisfying some conditions.
We propose another optimal yet simpler solution using Lyndon trees.

\begin{theorem}
  For any string $w$ of length $n$,
  there is a data structure of $O(n)$ space that supports 2-Period Queries in $O(1)$ time.
  The data structure can be built in $O(n)$ time.
\end{theorem}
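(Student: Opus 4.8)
The plan is to reduce a 2-Period Query on $[i..j]$ to a small number of lowest-common-ancestor queries on the two Lyndon trees $\LyndonTree_0(w)$ and $\LyndonTree_1(w)$, using the structural picture from Lemma~\ref{lemma:run-node}: if $\exrun([i..j]) = (i',j',p')$ exists, then for the appropriate $\ell$ the run has an \Lroot{} that is a right node of $\LyndonTree_\ell(w)$, and this \Lroot{} starts somewhere in $[i..i+p'-1] \subseteq [i..\lfloor(i+j)/2\rfloor]$. The first step is to build both Lyndon trees in $O(n)$ time and space (Lemma~\ref{lemma:build_lynontree}), equip each with an $O(n)$-space, $O(1)$-query LCA structure~\cite{BenderF00}, and also build LCE (and reverse-LCE) structures on $w$ as in Theorem~\ref{theorem:lce} so that, given a candidate \Lroot{} interval, we can in $O(1)$ time extend its periodicity maximally to the left and right and thereby recover the unique run containing it, exactly as in Section~\ref{subsec:runsfromlongestlyndon}.

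Next I would pin down \emph{which} node to look at. Fix $\ell \in \{0,1\}$ and consider $\alpha_\ell = \LCA_\ell([i..j])$. I claim that if $\exrun([i..j])$ exists with the critical order being $\lexorderl{\ell}$, then the relevant \Lroot{} is closely tied to $\alpha_\ell$ — concretely, I expect that the \Lroot{} is the right node obtained by walking from $\alpha_\ell$ (or from the leaf at position $i$) in a controlled way, and that in fact it suffices to test $O(1)$ explicitly identifiable nodes: e.g.\ $\alpha_\ell$ itself, its right child, and the right node whose interval starts at $i$. The key combinatorial fact to establish is a converse-flavored statement to Lemma~\ref{lemma:run-node}: the \Lroot{} $\lambda$ of $\exrun([i..j])$ with respect to $\lexorderl{\ell}$ that starts at the smallest position $\ge i$ is precisely characterized as a node of $\LyndonTree_\ell(w)$ determined by $i$ and $j$. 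Since $w[i..j]$ has period $p' \le (j-i+1)/2$, it contains at least two occurrences of the \Lroot{}, so at least one \Lroot{} occurrence lies entirely within $[i..j]$; by Lemma~\ref{lemma:lyndonlca} applied inside $w_\ell$ (restricted appropriately — here I must be careful that $[i..j]$ need not be inside a single Lyndon factor, but the leftmost \Lroot{} occurrence in $[i..j]$ does sit inside the factor containing $i$, or I argue directly via Duval's Lemma~\ref{lemma:duval}), that occurrence is a right node of $\LyndonTree_\ell(w)$ and is determined as $\LCA_\ell$ of its two endpoints, hence reachable via $O(1)$ LCA queries keyed on $i$, $j$.

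The query algorithm is then: for each $\ell\in\{0,1\}$, compute the $O(1)$ candidate right-node intervals $[i_\lambda..j_\lambda]$ as above (each via a constant number of LCA queries), set $p = j_\lambda - i_\lambda + 1$, use LCE/reverse-LCE to extend to the maximal run $(i^*,j^*,p)$, and accept if $i^*\le i$, $j\le j^*$, and $p\le (j-i+1)/2$; return the smallest such $p$ found over both values of $\ell$, or report that none exists. Correctness follows from Lemma~\ref{lemma:run-node} (the true $\exrun$, if any, is caught by the correct $\ell$ and one of the candidate nodes) together with the periodicity lemma~\cite{fine65:_uniquen} (any accepted candidate genuinely gives the smallest period and the answer is unique). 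The running time is $O(1)$ per query and $O(n)$ preprocessing and space, since everything is a constant number of LCA and LCE queries on linear-size structures.

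The main obstacle I anticipate is the precise identification of the constantly-many candidate nodes and proving that they always include the correct \Lroot{}: the leftmost \Lroot{} occurrence inside $[i..j]$ need not be $\LCA_\ell([i..j])$ nor its immediate right child, because the \Lroot{} has length $p'\le (j-i+1)/2$ which can be much shorter than $j-i+1$, so $\LCA_\ell([i..j])$ may be a high ancestor spanning several \Lroot{} copies. Resolving this requires a structural lemma saying that within the subtree of $\alpha_\ell$, the copies of a run's \Lroot{} appear as a chain of right nodes hanging off a left-leaning spine, so that from $\alpha_\ell$ one reaches such a \Lroot{} in $O(1)$ steps — or, alternatively, recovering the \Lroot{} starting at the designated position $i$ itself (which by Lemma~\ref{lemma:iffConditionOfRightNode} is a right node iff $w_\ell[i..\cdot]$ is the longest Lyndon word from $i$, detectable via one LCA query at positions $i$ and $i+1$) and checking whether its periodic extension covers $[i..j]$. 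I would structure the proof around this second, cleaner route, falling back to $\alpha_\ell$-based candidates only to cover the boundary case where the \Lroot{} does not start exactly at $i$.
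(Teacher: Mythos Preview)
Your framework is right --- build both Lyndon trees with $O(1)$ LCA support, then for each $\ell$ examine $O(1)$ candidate nodes and verify --- but you have not actually identified working candidates, and both of your concrete suggestions fail.

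The fallback you call the ``cleaner route'' does not work. The longest Lyndon word starting at position $i$ (equivalently, the right node of $\LyndonTree_\ell(w)$ whose interval begins at $i$) need not be an \Lroot{} of $\exrun([i..j])$ for \emph{either} $\ell$. The \Lroot{}s of the run begin at positions in a single residue class modulo $p'$, and there is no reason for $i$ to lie in that class; when it does not, $\longestLyndonl{i}{\ell}$ can have length strictly less than $p'$ for the ``correct'' $\ell$ and also length $\neq p'$ for the other $\ell$. Concretely, take $w=\texttt{abcabcabd}$ with $\lexorderl{0}\colon a<b<c<d$ and query $[2..8]$: here $\exrun([2..8])=(1,8,3)$, yet $\longestLyndonl{2}{0}=[2..3]$ and $\longestLyndonl{2}{1}=[2..2]$, neither of length $3$, and neither periodic extension recovers the run. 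So this route does not merely have a ``boundary case'' to patch --- it misses the main case. Your first suggestion, $\LCA_\ell([i..j])$ or its right child, you already (correctly) flagged as unreliable: the right child can be a Lyndon word spanning several periods and extending past $j'$, hence not an \Lroot{} either.

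The key idea you are missing is to replace $j$ by the \emph{midpoint}: set $m=\lceil (i+j)/2\rceil$ and take $\alpha_\ell=\LCA_\ell([i..m])$. Because $p'\le\lfloor(j-i+1)/2\rfloor$, there is an \Lroot{} $\lambda$ of $r$ (for the correct $\ell$) that contains position $m$ and lies entirely inside $(i..j]$; in particular $\lambda$ does not contain $i$, so $\alpha_\ell$ is a proper ancestor of $\lambda$ by Lemma~\ref{lemma:run-node}. One then shows that the right child $\beta$ of $\alpha_\ell$ equals $\lambda$: if $\beta$ were a proper ancestor of $\lambda$ then $|\beta|>p'$ forces $j_\beta>j'$ (otherwise $w[i_\beta..j_\beta]$ would be a Lyndon word with period $p'<|\beta|$), and then the mismatch at position $j'+1$, which by the choice of $\ell$ goes the ``small'' way, produces a proper suffix of $w[i_\beta..j_\beta]$ that is $\lexorderl{\ell}$-smaller than $w[i_\beta..j_\beta]$ itself, contradicting that it is Lyndon. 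Thus a single LCA query per $\ell$ followed by one step to the right child yields the unique candidate; the paper then looks up the run pre-associated to that node, though your LCE-based verification would work equally well in $O(1)$.
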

\begin{proof}
  We construct $\LyndonTree_0(w)$ and $\LyndonTree_1(w)$
  in $O(n)$ time and space using Lemma~\ref{lemma:build_lynontree}.
  At the same time, we compute the runs in $w$ and 
  associate every node corresponding to an \Lroot{} of a run with the information of the run,
  which can be done in $O(n)$ total time as mentioned in Section \ref{sec:algorithm}.
  We also augment these trees with data structures in $O(n)$ time and space so that
  lowest common ancestor (LCA) queries can be answered in $O(1)$ time~\cite{BenderF00}.
  Given a query with interval $[i..j]$,
  our algorithm computes $\alpha_0 = \LCA_0([i..\lceil (i+j)/2 \rceil])$ and
  $\alpha_1 = \LCA_1([i..\lceil (i+j)/2 \rceil])$,
  and check their right children.
    
  Suppose that $r = \exrun([i..j]) = (i', j', p')$ exists.
  Let $\ell\in\{0,1\}$ with $w_\ell[j'+1] \lexorderl{\ell} w_\ell[j'+1-p']$.
  Since the period $p'$ of $r$ is at most $\lfloor (j-i+1)/2 \rfloor$,
  we have that
  $i \leq \lceil (i+j)/2 \rceil - p' < \lceil (i+j)/2 \rceil + p' -1 \leq j$.
  Thus, there exists an \Lroot{} $\lambda$ of $r$ with respect to $\lexorderl{\ell}$
  that contains position $\lceil (i+j)/2 \rceil$.
  By Lemma~\ref{lemma:run-node}, $\lambda$ is a right node.
  Moreover, $\alpha_\ell$ is an ancestor of $\lambda$ since $\lambda$ does not contain position $i$
  while both contain position $\lceil (i+j)/2 \rceil$.
  We claim that the right child of $\alpha_\ell$ is $\lambda$.
  Assume to the contrary that the right child $\beta = [i_\beta..j_\beta]$ of $\alpha_\ell$ is not $\lambda = [i_\lambda..j_\lambda]$.
  By definition of  $\alpha_\ell$, $\beta$ and $\lambda$,
  we have that $\beta$ must be an ancestor of $\lambda$ and
  $i^\prime \leq i < i_\beta < i_\lambda$ since $\lambda$ is a right node.
  Also, it must be that $j \leq j^\prime < j_\beta$ since otherwise,
  $w[i_\beta..j_\beta]$ would have period $p^\prime < |[i_\beta..j_\beta]|$ due to run $r$, contradicting that it is a Lyndon word.
  However, by the definition of $\ell$, this implies that  $w[i_\lambda..i_\beta] \lexorderl{\ell} w[i_\beta..i_\beta]$,
  still contradicting that $w[i_\beta..j_\beta]$ is a Lyndon word.

  Therefore, if $\exrun([i..j])$ exists, 
  we can find it by checking the two nodes that are the right children
  of $\alpha_0$ and $\alpha_1$ in constant time.
\end{proof}

\section{Conclusion}
\label{sec:conclusions}
We show a remarkably simple proof to the 15 year-old runs conjecture,
by discovering a beautiful connection between the \Lroot{}s of runs and
the longest Lyndon word starting at each position of the string.
We also show a bound of $\sigma(n) < 3n$ for the maximum sum of exponents 
of runs in a string of length $n$, improving on the previous best bound of $4.1n$~\cite{crochemore12},
as well as improved analyses on related problems.
We also proposed a simple linear-time algorithm for computing all the runs in a string.
Furthermore, realizing that the longest Lyndon word starting at each position of the string corresponds
to a right node in the Lyndon tree, we showed a simple optimal solution to the 2-Period Query problem.

The characterizations of runs in terms of Lyndon words as shown in this paper
significantly improves our understanding of how runs can occur in strings.
A remaining question is the exact value of $\lim_{n\rightarrow\infty} \rho(n)/n$,
which is known to exist but is never reached~\cite{giraud09:_asymp}.

\section*{Acknowledgments}
HB,SI,MT were supported by JSPS KAKENHI Grant Numbers
25280086, 26280003, 25240003.
The authors thank (in alphabetical order)
Maxime Crochemore,
Antoine Deza,
Frantisek Franek,
Gregory Kucherov,
Simon Puglisi, and
Ayumi Shinohara
for helpful comments and discussions.

\bibliographystyle{siam}

\end{document}